\documentclass[journal]{IEEEtran}

\usepackage{amsmath,amssymb,amsfonts}
\usepackage{amsthm}
\usepackage{algorithmic}
\usepackage{algorithm}
\usepackage{array}
\usepackage{graphicx}
\usepackage{textcomp}
\usepackage{xcolor}
\usepackage{booktabs}
\usepackage{cite}
\usepackage{url}
\usepackage{hyperref}
\usepackage{mathtools}
\usepackage{bm}
\usepackage{bbm}
\usepackage{multirow}
\usepackage{subcaption}

\newcommand{\E}{\mathbb{E}}
\newcommand{\calV}{\mathcal{V}}

\newcommand{\calT}{\mathcal{T}}
\newcommand{\calK}{\mathcal{K}}
\newcommand{\calM}{\mathcal{M}}
\newcommand{\calD}{\mathcal{D}}
\newcommand{\AoI}{A}
\newcommand{\indic}{\mathbbm{1}}
\newcommand{\ho}{\mathrm{ho}}
\newcommand{\safe}{\mathrm{safe}}
\newcommand{\sat}{\mathrm{sat}}
\newcommand{\veh}{\mathrm{veh}}
\newcommand{\ts}{\tau_s}
\newcommand{\td}{\tau_{\mathrm{ac}}}

\theoremstyle{plain}
\newtheorem{theorem}{Theorem}
\newtheorem{lemma}{Lemma}
\newtheorem{corollary}{Corollary}
\newtheorem{proposition}{Proposition}
\theoremstyle{definition}
\newtheorem{definition}{Definition}

\theoremstyle{remark}
\newtheorem{remark}{Remark}
\newtheorem{assumption}{Assumption}

\title{Safety-Aware AoI Scheduling for LEO Satellite-Assisted Autonomous Driving}

\author{Kangkang Sun,~\IEEEmembership{Member,~IEEE,}
Junyi He,~\IEEEmembership{Member,~IEEE,} Juntong Liu,
        Xiuzhen Chen,~\IEEEmembership{Member,~IEEE,}
        Jianhua Li,~\IEEEmembership{Senior Member,~IEEE,}
        Minyi Guo,~\IEEEmembership{Fellow,~IEEE}
  \thanks{Kangkang Sun, Juntong Liu, Jianhua Li, Xiuzhen Chen and Minyi Guo are
          with the Shanghai Key Laboratory of Integrated Administration
          Technologies for Information Security, School of Computer Science,
          Shanghai Jiao Tong University, Shanghai 200240, China
          (e-mail: szpsunkk@sjtu.edu.cn; ljttt0824@sjtu.edu.cn; lijh888@sjtu.edu.cn;
          xzchen@sjtu.edu.cn; guo-my@cs.sjtu.edu.cn).}
  \thanks{Junyi He is  School of Science and Engineering, The Chinese University of Hong Kong, Shenzhen, Guangdong, China (e-mail: junyihe@link.cuhk.edu.cn).}
  \thanks{\textit{Corresponding author: Jianhua Li}.}
  \thanks{This work has been submitted to the IEEE for possible publication. Copyright may be transferred without notice, after which this version may no longer be accessible.}
}

\begin{document}

\maketitle

\begin{abstract}
Autonomous platoons traversing infrastructure gaps increasingly depend
on LEO satellite backhaul for safety-critical updates, yet no existing
framework jointly addresses compound Doppler from simultaneous satellite
and vehicle motion, sub-slot handover outages that exceed
collision-alert deadlines, and heterogeneous freshness requirements
across three vehicular priority classes. The core challenge is a
\emph{timescale mismatch}: coarse control slots hide sub-slot outages,
which makes both AoI spike analysis and safety verification ill-posed.
Ping-pong handover oscillations further compound AoI cost in a way that
purely reactive schedulers cannot mitigate. We address these challenges
through a unified framework that couples a two-timescale AoI model with
tiered time-average safety constraints enforced by virtual queues. A
closed-form ping-pong AoI envelope reveals that cumulative penalty
grows \emph{quadratically} in oscillation length, analytically
justifying oscillation suppression as the highest-leverage safety
mechanism. The resulting drift-plus-penalty template is instantiated as
SafeScale-MATD3 with proactive handover timing and multi-task
dual-critic MARL. A key finding is that suppressing brief but repeated
ping-pong oscillations yields larger safety returns than shortening any
single outage, and that tick-level AoI accounting is a necessary
condition for verifiable collision-alert guarantees under LEO
handovers. Simulations show that SafeScale-MATD3 is the only method
satisfying the strict 1\% collision-alert violation budget, reducing
violation rate by 4 to 5.5 times versus baselines, while achieving
35\% lower collision-alert AoI and strict Pareto dominance on the
energy and freshness tradeoff.
\end{abstract}

\begin{IEEEkeywords}
Age of Information, LEO satellite networks,
handover scheduling, safety constraints, multi-agent
reinforcement learning.
\end{IEEEkeywords}

\section{Introduction}
\label{sec:intro}

\IEEEPARstart{A}{utonomous} driving safety depends on timely
exchange of perception and intent. In remote highways and
infrastructure gaps, satellite backhaul is often the only wide-area
link~\cite{al2022survey, jiang2024game, wang2025integrated, wang2025low}. Low Earth orbit (LEO) constellations such
as Starlink offer global coverage and low propagation delay, yet a
fundamental tension persists: \emph{safety-critical vehicular traffic
demands sub-second information freshness, while LEO handovers impose
repeated link interruptions that can outlast the very deadlines they
must serve.}

The \emph{Age of Information} (AoI)~\cite{yates2021age}, defined
as the elapsed time since the last successfully received update was
generated, captures information staleness in ways that throughput
or delay alone cannot. In platoon-based autonomous driving, this
distinction is safety-critical: collision alerts require near-instant
freshness for emergency braking, platoon control messages demand
tighter timeliness for string stability~\cite{parvini2021aoi}, and
map updates tolerate substantially longer staleness. These three
priority classes carry fundamentally different violation tolerances,
yet no existing LEO AoI framework formulates or enforces them jointly.
Three structural gaps prevent prior work from meeting these
requirements simultaneously.

There are three gaps that prevent prior work from meeting these requirements simultaneously:
\textit{Gap 1: dual-dynamic channel modeling is absent.}
All existing LEO AoI works, including multi-hop relay
analysis~\cite{chiariotti2020aoi,soret2020latency, vikhrova2020age}, DPP-based downlink
scheduling~\cite{dai2025amdt, lei2025joint, jiao2023age}, and SAGIN joint
optimization~\cite{lang2025aoi, zhang2025aoi, huang2025task}, assume static ground users
and model only satellite-side Doppler. When both the satellite
and the vehicle move simultaneously, a compound Doppler shift arises
that invalidates quasi-static channel assumptions and collapses
coherence time to a small fraction of any scheduling slot.
As we derive in Section~\ref{sec:doppler}, this distinguishes
LEO-AV systems structurally from all prior work.
Vehicular AoI studies~\cite{parvini2021aoi,zhang2024drl,azizi2024vlc}
address vehicle dynamics but operate over terrestrial links with
negligible handover cost, leaving the compound Doppler problem
entirely unaddressed.
\textit{Gap 2, timescale inconsistency corrupts AoI spike modeling.}
Prior LEO AoI work uses coarse control slots whose duration is much
longer than a typical handover outage~\cite{dai2025amdt,lang2025aoi}.
Because outages are shorter than one slot, any slot-granularity
floor operation maps the outage to zero, yet a continuous
slot-fraction surrogate remains positive and misleading. Mixing
these two representations inflates spike estimates and leaves
sub-slot safety checks ill-defined~\cite{park2025joint}. Without a
finer tracking tick aligned to actual outage duration, no scheduler
can produce verifiable safety guarantees for millisecond-grade
freshness requirements.
\textit{Gap 3, safety-critical multi-priority AoI constraints
are unformulated.}
No prior LEO AoI work jointly maps heterogeneous vehicular safety
budgets, from tight collision-alert tolerances to permissive
map-update tolerances, into enforceable time-average constraints
with handover-aware online control. Existing LEO AoI
schedulers~\cite{dai2025amdt,lang2025aoi} minimize average AoI
without priority differentiation, while vehicular
schedulers~\cite{parvini2021aoi,zhang2024drl} capture multi-priority
structure but ignore LEO handover dynamics entirely. This orthogonal
blind spot means that even a single handover outage already pushes
collision-alert AoI beyond the safety threshold, a violation that no
purely reactive scheduler can prevent, regardless of its
inter-handover optimality.
The three gaps above motivate the following questions:

\begin{itemize}
\item \textit{Q1: (Gaps~1\,\&\,2, Sec.~\ref{sec:model}): How should AoI be modeled to \emph{simultaneously} capture
sub-slot outage durations, multi-priority safety thresholds, and
coarse control slots, without timescale inconsistency?}

\item \textit{Q2: (Gaps~1\,\&\,2, Sec.~\ref{sec:analysis}): How does cumulative AoI cost scale with the length of ping-pong handover sequences, and why is suppressing oscillations more valuable than shortening individual outages?}

\item \textit{Q3: (Gap~3, Sec.~\ref{sec:problem}): How can time-average AoI violation budgets be enforced \emph{online}
with provable guarantees, without per-slot distributional assumptions?}

\item \textit{Q4: (Gaps~1 to 3, Sec.~\ref{sec:algorithm}): How should this multi-timescale, safety-constrained scheduling problem
be solved efficiently in a multi-platoon, multi-satellite setting?}

\end{itemize}


We address the four questions through a unified framework combining a
two-timescale AoI model, handover spike analysis, drift-plus-penalty
(DPP) virtual queue control, and a new multi-agent reinforcement
learning algorithm, SafeScale-MATD3. The main contributions are as
follows.

\begin{itemize}
    \item \textit{Two-timescale AoI model with compound Doppler (Q1).}
    Existing LEO AoI models use slot-level granularity that maps
    sub-slot handover outages to zero, which makes safety verification
    ill-posed. We resolve this by introducing a fine-grained AoI
    accounting tick nested inside a coarser control slot, ensuring
    that both outage lengths and safety thresholds resolve to non-zero
    integers. A compound Doppler analysis further reveals that
    satellite motion dominates coherence-time variation, enabling
    TLE-based orbital prediction as a low-dimensional channel
    predictor. This two-timescale structure is a \emph{prerequisite}
    for all subsequent safety results.

    \item \textit{Quadratic ping-pong spike envelope (Q2).}
    Quantifying the AoI cost of consecutive handovers is analytically
    challenging because each outage elevates the baseline for the next,
    creating a compounding effect invisible to per-outage analysis. We
    derive a closed-form envelope showing that cumulative AoI penalty
    grows \emph{quadratically} in ping-pong sequence length, a
    counter-intuitive result implying that a burst of $k{=}3$
    consecutive handovers is $6{\times}$ costlier than three isolated
    ones. This insight shifts the design priority from shortening
    individual outages to suppressing oscillation sequences, and
    analytically justifies the discretionary handover budget in our
    formulation.

    \item \textit{Tiered safety enforcement via virtual queues (Q3).}
    Enforcing heterogeneous freshness budgets, 1\% for collision
    alerts, 5\% for platoon control, and 20\% for map updates, online
    is difficult because the AoI distribution under time-varying
    handovers lacks a tractable closed form. We map the three budgets
    to stabilizable virtual queues and prove that strong stability
    implies time-average violation compliance \emph{without
    distributional assumptions}; only bounded per-slot increments and a
    Slater-feasible baseline suffice. The virtual-queue backlog serves
    as a runtime-observable tightness signal that the policy exploits
    for anticipatory control.

    \item \textit{SafeScale-MATD3, a structurally safe multi-agent scheduler (Q4).}
    Instantiating the DPP template as a practical MARL algorithm
    requires reconciling cooperative interference management with
    priority-specific safety enforcement, a multi-objective challenge
    that reward shaping alone cannot resolve. Our key design insight is
    that each virtual queue maps to a dedicated critic module, making
    safety enforcement \emph{structurally embedded} rather than
    reward-engineered. Evaluation shows that SafeScale-MATD3 is the
    \emph{only} method satisfying the strict 1\% collision-alert
    budget, reducing the violation rate by $4{\times}$ to
    $5.5{\times}$ versus all baselines while achieving a 35\% lower
    collision-alert AoI and strict Pareto dominance on the energy and
    freshness tradeoff.
\end{itemize}


\section{Related Work}
\label{sec:related}

This paper sits at the intersection of LEO-satellite AoI analysis and
vehicular AoI scheduling, two research lines that have largely
progressed in isolation. We organize the review along these two lines:
Section~\ref{sec:rw_leo} surveys AoI optimization in LEO satellite
networks, and Section~\ref{sec:rw_v2x} reviews AoI scheduling in
vehicular networks. 

\begin{table}[t]
\centering
\caption{Positioning vs.\ related work (qualitative).}
\label{tab:rw_compare}
\footnotesize
\resizebox{0.5\textwidth}{!}{
\begin{tabular}{lcccccc}
\toprule
\textbf{Ref.} & \textbf{Compound Doppler} & \textbf{HO model} &
\textbf{Safety budget} & \textbf{Multi-pri.} & \textbf{MARL} &
\textbf{Two-scale}\\
\midrule
\cite{chiariotti2020aoi} & N/A & coarse & N/A & N/A & N/A & N/A\\
\cite{dai2025amdt} & sat.\ only & N/A & N/A & N/A & DPP & slot\\
\cite{lang2025aoi} & sat.\ only & yes & N/A & N/A & DQN & slot\\
\cite{park2025joint} & N/A & yes & N/A & N/A & MPC & N/A\\
\cite{parvini2021aoi} & N/A & N/A & N/A & yes & TD3 & slot\\
\textbf{Ours} & yes & yes & yes & yes & TD3 & tick+slot\\
\bottomrule
\end{tabular}}
{\footnotesize
Compound Doppler: ``sat.\ only'' means only satellite-side Doppler is
modeled. Safety budget: explicit AoI safety-violation constraints.
Two-scale: ``tick+slot'' denotes sub-slot AoI accounting plus slot-level
decision control.}
\end{table}

\subsection{AoI in LEO Satellite Networks}
\label{sec:rw_leo}

LEO-AoI studies focus on relay-chain freshness and
single-satellite scheduling but do not model mobile users or
handover-induced spike costs. Chiariotti~et~al.~\cite{chiariotti2020aoi} derived tight
hypoexponential AoI bounds for multi-hop LEO relay chains under
FCFS/OPF/HAF policies, providing a phase-segmented decomposition
framework. However, they assume static topology and do not model
handover.
Dai~et~al.~\cite{dai2025amdt} proposed Lyapunov DPP-based AoI-aware
downlink scheduling with angular beamforming under Shadowed-Rician
fading. Their setting remains single-satellite/multi-user without
handover modeling, heterogeneous safety budgets, or priority-specific
constraints.
Lang~et~al.~\cite{lang2025aoi} combined diffusion-augmented DQN with
handover-frequency minimization in HAP-assisted SAGIN, representing a
nearby AoI baseline with mobility-aware handover control. Their setting
still assumes static ground users and lacks multi-priority safety
budgeting.
Park~et~al.~\cite{park2025joint} measured Starlink handover outages
(outage duration consistent with LEO handover measurements) and proposed joint pruned
MPC for video QoE under handover dynamics.
LEO-AoI studies model orbital/channel dynamics but generally assume
stationary terminals. None jointly captures compound
vehicle and satellite Doppler, sub-slot handover spike analysis, and
priority-differentiated safety constraints
(Table~\ref{tab:rw_compare}, first three columns).
In short, existing LEO-AoI work does not yet supply a modeling stack
that is simultaneously mobile-terminal aware, handover spike faithful,
and safety budget oriented; the present paper targets exactly that
stack.

\subsection{AoI in Vehicular Networks}
\label{sec:rw_v2x}

Vehicular-AoI studies address multi-priority freshness and
MARL-based scheduling but operate over terrestrial links with
negligible satellite handover cost.
Parvini~et~al.~\cite{parvini2021aoi} introduced a global and local
dual-critic TD3 architecture for platoon C-V2X AoI scheduling, showing
that task-decomposed critics reduce gradient conflict across flows.
Their formulation operates over terrestrial sidelinks with negligible
handover cost.
Zhang~et~al.~\cite{zhang2024drl} analyzed multi-priority queues
(HPD$>$DENM$>$CAM) in C-V2X Mode~4 with NOMA, reporting substantial AoI
gains for high-priority messages via hybrid-action DRL. Their priority
classification motivates our message hierarchy but assumes fixed
terrestrial coverage.
Azizi~et~al.~\cite{azizi2024vlc} proposed MA-TD3 for energy-efficient
AoI optimization in VLC-V2X and showed rapid convergence, supporting
TD3-style multi-agent control for mixed objectives.
In mobility and handover management, bandit-based handover
thresholding~\cite{BaTT} addresses the ping-pong versus late-handover
trade-off under extreme mobility. Our proactive TLE-based module
(Section~\ref{sec:proactive}) complements such threshold tuning with
queue-informed timing.
Prediction-based integrated satellite and terrestrial association with
connection-change penalties~\cite{IntSatTerr} shares the spirit of using
orbital prediction to reduce handover cost. We focus on AoI spike
quantification and multi-priority safety enforcement rather than
throughput-optimal association alone.
Separately, game-theoretic satellite resource management
(surveys~\cite{jiang2024game,luong2025incentive}, coalition
games~\cite{kondo2025coalition,li2026coalition}) studies cooperative
spectrum/beam allocation, but does not model AoI evolution with safety
constraints, and it is complementary to this work.
In summary, the vehicular AoI line provides multi-priority MARL
machinery but leaves open how freshness metrics and coordination survive
once LEO handovers and sub-slot AoI evolution enter the picture
(Table~\ref{tab:rw_compare}, handover and two-scale columns).


\section{System Model, Two-Timescale Design, and Problem Formulation}
\label{sec:model}

This section specifies the LEO-assisted platoon scenario and formulates
the joint optimization problem. In plain terms, platoon leaders maintain
satellite links while followers share updates over short-range V2V,
handover outages occur in the hundred-millisecond range, and three
message classes differ sharply in allowable staleness. We organize this
section as follows: Section~\ref{sec:timescales} introduces the
two-timescale structure; Section~\ref{sec:doppler} derives the compound
Doppler model; Section~\ref{sec:handover_model} formalizes handover and
tick-level AoI evolution; Section~\ref{sec:e2e} completes power and
end-to-end AoI accounting; Section~\ref{sec:problem} states the final
constrained optimization.

Three foundational modeling choices jointly determine the analytical
architecture, each necessitated by one of the gaps in
Section~\ref{sec:intro}. \textbf{Choice 1} (Gap~2): the two-timescale
structure ($\ts$, $\td$) resolves the floor-approximation inconsistency.
\textbf{Choice 2} (Gap~1): compound Doppler analysis establishes
satellite dominance and motivates TLE prediction as scheduler state.
\textbf{Choice 3} (Gap~3): discretionary/forced handover splitting
ensures handover-budget feasibility by construction.
These three choices map directly to the three structural features of
the LEO-platoon scenario in Fig.~\ref{fig:system}: compound Doppler
dynamics, mandatory handover outages that exceed collision-alert
deadlines, and order-of-magnitude deadline differences across message
classes. The following subsections formalize each feature.
Fig.~\ref{fig:system} illustrates the complete system scenario.
A fleet of $P$ autonomous vehicle platoons traverses a remote
highway segment without terrestrial cellular infrastructure. Each
platoon is led by a Platoon Leader (PL) maintaining a satellite uplink
to the LEO constellation; up to $N_j{-}1$ followers receive
safety-critical updates through intra-platoon V2V Mode~4 links.

\begin{figure}[t]
    \centering
    \includegraphics[width=0.45\textwidth]{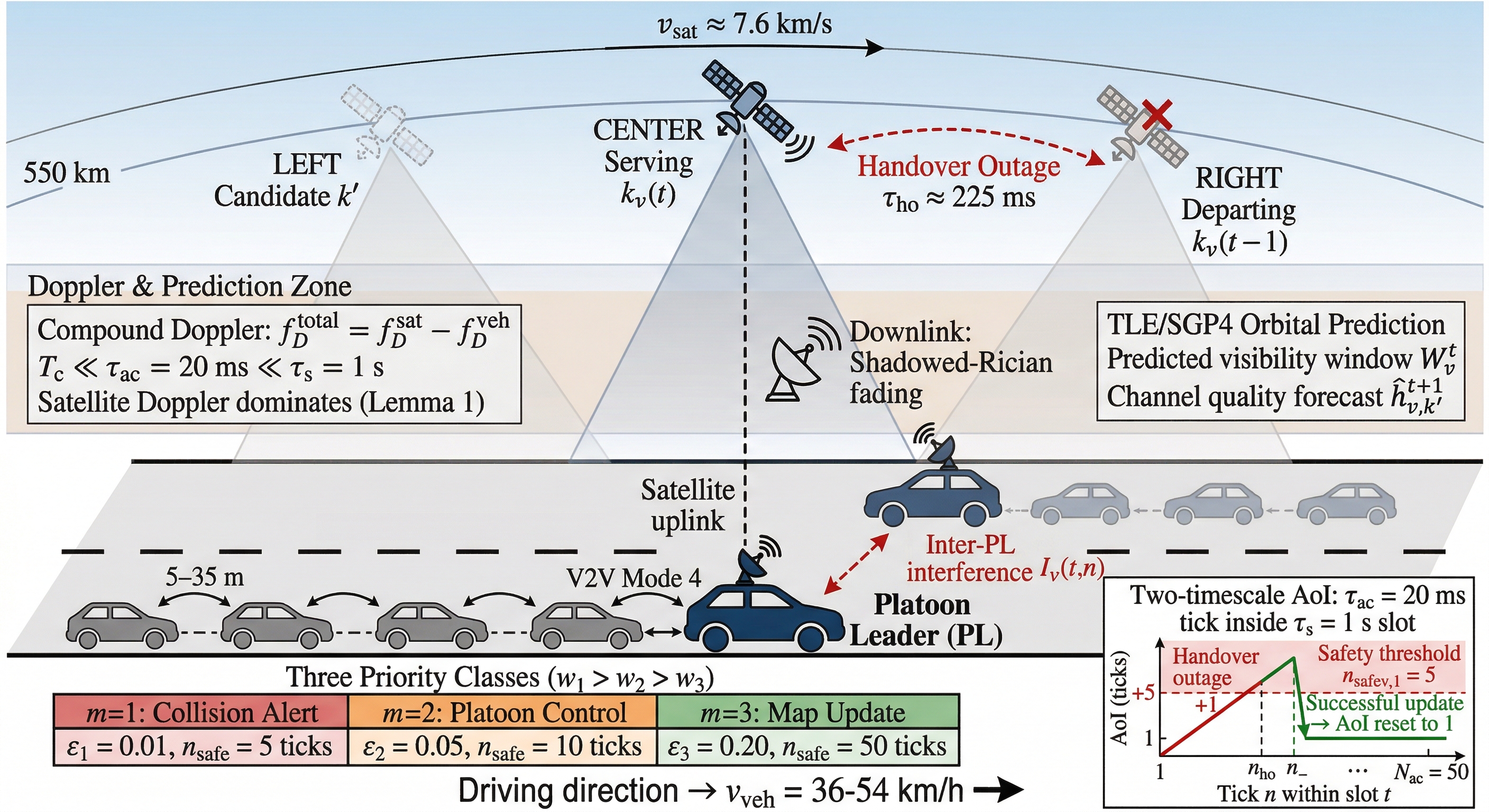}
    \caption{System overview of LEO satellite-assisted autonomous
    vehicle platoon communication under the two-timescale AoI
    framework. 
    }
    \label{fig:system}
\end{figure}

\subsection{Two-Timescale Rationale and Network Architecture}
\label{sec:timescales}

In this subsection, we resolve the time-scale inconsistency.
Three operational timescales must be simultaneously resolved: (i) the
per-class safety freshness thresholds $\Delta_{v,m}^\safe$; (ii) the
handover outage duration $\tau_\ho$~\cite{park2025joint}, which is
shorter than a control slot yet long relative to safety ticks; and
(iii) the control/RL slot $\ts$ matching the DPP
literature~\cite{dai2025amdt,lang2025aoi}. Because $\tau_\ho < \ts$,
the slot-granularity floor maps every outage to zero, yet the
continuous surrogate $\mu_\tau/\ts$ is positive and misleading.

We introduce a fine-grained \emph{AoI
accounting tick} $\td$ alongside the decision slot $\ts$, chosen so
that safety thresholds $n_{v,m}^\safe = \Delta_{v,m}^\safe/\td$ and
outage lengths $n_\ho = \lfloor\tau_\ho/\td\rfloor$ are
\emph{non-zero} integers. The floor expectation relative error
$|\lfloor\tau/\td\rfloor - \tau/\td|/n_\ho$ is then bounded and
explicit. AoI state and safety checks evolve at $\td$-resolution;
satellite association, scheduling, and RL actions occur at
$\ts$-resolution.
We adopt a Walker-Delta model
$\alpha$\cite{choi2026ilcho}: $ N_\mathrm{total}/N_p/F$ at altitude $H_L=550$\,km.
The $m$-th satellite position at slot $t$
is
\begin{equation}
\mathbf{s}_m^t = a_s
\begin{pmatrix}
\cos\psi_m^t \cos\Omega_m - \sin\psi_m^t \sin\Omega_m \cos\alpha \\
\cos\psi_m^t \sin\Omega_m + \sin\psi_m^t \cos\Omega_m \cos\alpha \\
\sin\psi_m^t \sin\alpha
\end{pmatrix},
\label{eq:sat_position}
\end{equation}
where $\psi_m^t=\Phi_m+\omega t+f_0$, $a_s=R_E+H_L$,
$\omega=\sqrt{GM_E/a_s^3}$, and $\Omega_m$ is the RAAN.

\begin{proposition}[Tick-size selection criterion]
\label{prop:tick_selection}
Let $n_1^\safe=\Delta_{\safe,\min}/\td$ be the minimum-class safety
threshold in ticks, $n_\ho=\lfloor\tau_{\ho,\min}/\td\rfloor$ the
minimum handover-outage length in ticks, and
$N_\mathrm{ac}=\ts/\td$ the number of AoI ticks per decision slot.
A sufficient design rule that guarantees $n_1^\safe\ge 2$ and
$n_\ho\ge 1$ is
\begin{equation}
\td \le \frac{\min(\Delta_{\safe,\min},\tau_{\ho,\min})}{2}.
\label{eq:tick_rule}
\end{equation}
The factor of~$2$ is sufficient but not necessary: $n_\ho\ge 1$
requires only $\td\le\tau_{\ho,\min}$, and $n_1^\safe\ge 2$ requires
only $\td\le\Delta_{\safe,\min}/2$. The joint criterion with the
factor-$2$ denominator simultaneously ensures both with a single
inequality and provides a one-tick resolution margin against rounding.
Here $\tau_{\ho,\min}$ denotes the minimum handover outage duration;
because actual outage durations $\tau_\ho^{(t)}$ are random, we
interpret $\tau_{\ho,\min}$ as the support lower bound of their
distribution (i.e., $\tau_\ho^{(t)}\ge\tau_{\ho,\min}$ almost
surely), consistent with the statistical characterization
in~\cite{park2025joint}. The specific value of $\td=20$\,ms is
fixed to satisfy~\eqref{eq:tick_rule} while balancing computational
cost and resolution margin.
\end{proposition}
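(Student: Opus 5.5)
The argument is elementary: split \eqref{eq:tick_rule} into its two components and verify each integer guarantee separately. Because $\min(\Delta_{\safe,\min},\tau_{\ho,\min})/2$ is at most both $\Delta_{\safe,\min}/2$ and $\tau_{\ho,\min}/2$, any $\td$ satisfying \eqref{eq:tick_rule} satisfies $\td\le\Delta_{\safe,\min}/2$ and $\td\le\tau_{\ho,\min}/2\le\tau_{\ho,\min}$. It then suffices to show that each of these weaker inequalities gives the corresponding integer bound. That same reduction also proves the ``sufficient but not necessary'' remark.

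\textbf{Safety threshold.} From $\td\le\Delta_{\safe,\min}/2$ we get $n_1^\safe=\Delta_{\safe,\min}/\td\ge 2$ directly. Because the text defines $n_{v,m}^\safe$ by a ratio, I will add the convention from Section~\ref{sec:timescales} that $\td$ is chosen to divide the thresholds, so this ratio is an integer. If one prefers $\lfloor\Delta_{\safe,\min}/\td\rfloor$ instead, the bound is unchanged, since the floor of any real number $\ge 2$ is $\ge 2$. Every other class has $\Delta_{v,m}^\safe\ge\Delta_{\safe,\min}$, so the bound holds for all classes by monotonicity.

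\textbf{Outage length.} From $\td\le\tau_{\ho,\min}$ we get $\tau_{\ho,\min}/\td\ge 1$, hence $n_\ho=\lfloor\tau_{\ho,\min}/\td\rfloor\ge 1$, since the floor is monotone and $\lfloor x\rfloor\ge 1$ whenever $x\ge 1$. Because $\tau_\ho^{(t)}\ge\tau_{\ho,\min}$ almost surely, the same monotonicity gives $\lfloor\tau_\ho^{(t)}/\td\rfloor\ge n_\ho\ge 1$ almost surely. So every realized outage, not only the minimal one, registers as at least one tick.

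\textbf{Margin and non-necessity.} Under \eqref{eq:tick_rule} we in fact have $\tau_{\ho,\min}/\td\ge 2$, so $n_\ho\ge 2$. This means a rounding perturbation of up to one tick still leaves a nonzero outage count, which is the claimed ``one-tick resolution margin.'' For non-necessity, a single example is enough: take $\Delta_{\safe,\min}$ large and $\td\in(\tau_{\ho,\min}/2,\tau_{\ho,\min}]$. Both target conditions then hold while \eqref{eq:tick_rule} fails.

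The only delicate point, and it is minor, is being careful about floor versus exact ratio and about the almost-sure support lower bound. The rest is direct inequality chasing. The concrete choice $\td=20$\,ms is a parameter check against the stated values of $\Delta_{\safe,\min}$ and $\tau_{\ho,\min}$ rather than part of the proof.
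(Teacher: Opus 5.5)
Your proposal is correct and follows the same route as the paper's proof. You split \eqref{eq:tick_rule} into $\td\le\tau_{\ho,\min}$ and $\td\le\Delta_{\safe,\min}/2$, use monotonicity of the floor together with $\tau_\ho^{(t)}\ge\tau_{\ho,\min}$ almost surely to get $n_\ho\ge 1$, and read off $n_1^\safe\ge 2$ directly; your extra checks (that the rule actually yields $n_\ho\ge 2$, and the example with $\td\in(\tau_{\ho,\min}/2,\tau_{\ho,\min}]$ showing non-necessity) go slightly beyond what the paper verifies but are sound.
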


\begin{proof}
Since $\tau_\ho^{(t)}\ge\tau_{\ho,\min}$ almost surely,
$n_\ho=\lfloor\tau_\ho^{(t)}/\td\rfloor\ge
\lfloor\tau_{\ho,\min}/\td\rfloor\ge 1$ whenever
$\td\le\tau_{\ho,\min}$, which is implied by~\eqref{eq:tick_rule}.
Similarly, $n_1^\safe=\Delta_{\safe,\min}/\td\ge 2$ whenever
$\td\le\Delta_{\safe,\min}/2$, also implied by~\eqref{eq:tick_rule}.
Taking $\td\le\min(\Delta_{\safe,\min},\tau_{\ho,\min})/2$ satisfies
both conditions simultaneously.
\end{proof}

\begin{figure}[t]
\centering
\includegraphics[width=\linewidth]{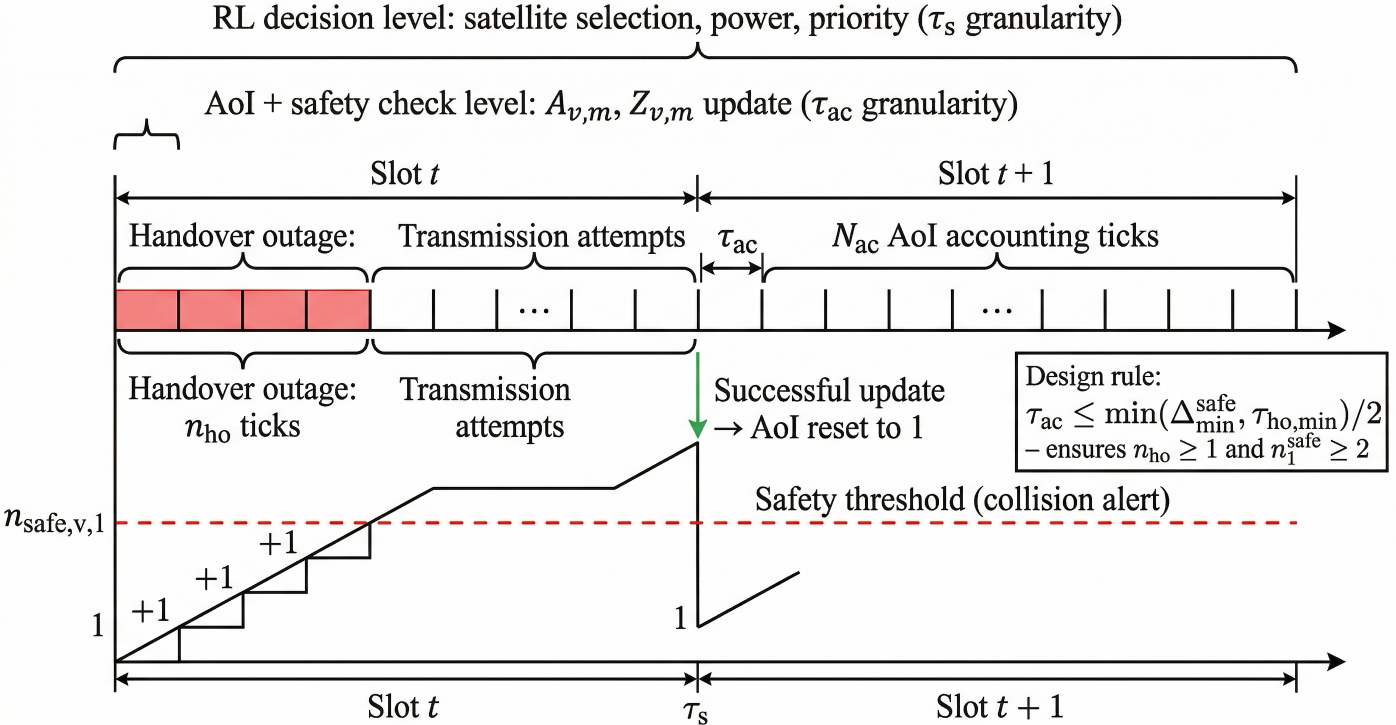}
\caption{Two-timescale AoI framework (Proposition~\ref{prop:tick_selection}).
}
\label{fig:twotimescale}
\end{figure}

Fig.~\ref{fig:twotimescale} illustrates the nested two-timescale
structure that underpins the entire framework. At the outer level, RL
decisions are made once per slot~$\ts$ with full satellite association,
power, and priority control. At the inner level, AoI
state~$A_v^{(t,n)}$ and safety virtual queues~$Z_{v,m}$ evolve at
every tick~$\td$, capturing sub-slot events that a coarse slot model
would miss. In particular, the handover outage phase (red ticks) maps
to a non-zero integer
$n_{\mathrm{ho}}=\lfloor\tau_{\mathrm{ho}}/\td\rfloor\ge 1$, and the
collision-alert threshold resolves to $n_{v,1}^{\mathrm{safe}}\ge 2$,
both guaranteed by the design criterion in Eq.~\eqref{eq:tick_rule}. This
separation ensures that safety violations are detected within the tick
at which they occur, rather than being averaged away at the slot
boundary.
Decision slots $\calT=\{1,2,\ldots,T\}$ have duration $\ts$. Each
slot contains $N_\mathrm{ac}=\ts/\td$ AoI ticks. The system
comprises:
\begin{itemize}
    \item \textbf{LEO constellation:} $\calK=\{1,\ldots,K\}$
    satellites in low Earth orbit. Visible set
    $\calK(t)\subseteq\calK$ updates deterministically per
    TLE/SGP4.

    \item \textbf{Vehicle platoons:} $P$ platoons, each with one
    Platoon Leader (PL) and $N_j-1$ followers. Let
    $\calV=\{v_1,\ldots,v_{N_v}\}$ denote PLs with $N_v=P$. PLs
    maintain the satellite uplink and followers receive V2V CAM over
    Mode~4.

    \item \textbf{Message classes:}
    $m\in\calM=\{1,2,3\}$: collision alerts (HPD-type),
    platoon control (DENM-type), map updates (CAM-type).
\end{itemize}

\subsection{Dual-Dynamic Doppler and Channel Model}
\label{sec:doppler}

With the two-timescale structure in place, we now characterize the
channel dynamics that determine per-tick transmission success.
The key distinction from all prior LEO-AoI work is that
\emph{both} the satellite and the vehicle move simultaneously,
producing a compound Doppler shift that collapses coherence time
well below a single AoI tick.

\begin{lemma}[Dual-dynamic Doppler and coherence time]
\label{lem:doppler}
Let $\theta_{\sat}(t)\in[0^\circ,90^\circ]$ be the satellite
elevation angle and $\psi_{\veh}(t)$ the angle between the vehicle
velocity vector and the satellite--vehicle link. The compound Doppler
shift is
\begin{equation}
f_D^\mathrm{total}(t) = \frac{f_c}{c}\bigl[
v_{\sat}\sin\theta_{\sat}(t)
- v_{\veh}\cos\psi_{\veh}(t)\bigr],
\label{eq:doppler_exact}
\end{equation}
and the channel coherence time is
\begin{equation}
T_c(t) \approx \frac{0.423}{|f_D^\mathrm{total}(t)|}.
\label{eq:coherence}
\end{equation}
Because $v_{\sat}\gg v_{\veh}$, we have
$f_D^\mathrm{total}\approx f_D^\sat$ and $T_c$ is governed
by satellite geometry. The resulting $T_c$ is far shorter than
both~$\ts$ and~$\td$ (verified in simulation), so the channel
undergoes many independent fading realizations within one AoI tick.
\end{lemma}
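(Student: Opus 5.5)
The plan is to derive \eqref{eq:doppler_exact} from the classical single-path Doppler formula, applied once to each moving endpoint and summed. The received frequency shift equals $-(f_c/c)\,\frac{d}{dt}\|\mathbf{s}_m^t-\mathbf{p}_v(t)\|$, where $\mathbf{p}_v(t)$ is the vehicle position. Write $\hat{\mathbf{u}}(t)$ for the unit line-of-sight vector from vehicle to satellite. The range rate then splits linearly as $\dot{\mathbf{s}}_m\cdot\hat{\mathbf{u}}-\dot{\mathbf{p}}_v\cdot\hat{\mathbf{u}}$. This splits the shift into a satellite term and a vehicle term, which is the source of the two summands in \eqref{eq:doppler_exact}.

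\textbf{Vehicle term.} By definition of $\psi_\veh$ this term is $v_\veh\cos\psi_\veh(t)$, up to the sign convention chosen for approach versus recession.

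\textbf{Satellite term.} Here I would use the circular Walker-Delta orbit of \eqref{eq:sat_position}. The velocity $\dot{\mathbf{s}}_m$ has magnitude $v_\sat=\omega a_s$ and is tangent to the orbit. In the flat-Earth/overhead-pass geometry that the paper implicitly adopts, the component of this velocity along the line of sight is $v_\sat$ times the cosine of the angle between the orbital tangent and $\hat{\mathbf{u}}$. This cosine is expressed through the elevation angle. Under the stated convention it gives $v_\sat\sin\theta_\sat$. I would state this as the standard LEO Doppler approximation: exact for a pass through zenith with a flat local horizon, and otherwise accurate up to an Earth-curvature factor $R_E\cos\theta_\sat/a_s$. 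Substituting both components yields \eqref{eq:doppler_exact}. I expect this geometric step to be the main obstacle, because the exact spherical expression involves $R_E/a_s$ and the orbital inclination. I would first try to state it as an approximation with an explicit correction term rather than claim exactness.

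\textbf{Coherence time.} For \eqref{eq:coherence} I would invoke Clarke's isotropic-scattering model. The envelope autocorrelation $J_0(2\pi f_D\tau)$ falls to $0.5$ at roughly $\tau\approx 9/(16\pi f_D)$. The geometric-mean rule of thumb then gives $T_c\approx\sqrt{9/(16\pi)}/f_D\approx 0.423/f_D$, taking $|f_D^{\mathrm{total}}|$ as the effective maximum Doppler.

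\textbf{Dominance.} The claim $f_D^{\mathrm{total}}\approx f_D^\sat$ follows from the ratio of the two terms, bounded by $v_\veh/(v_\sat\sin\theta_\sat)$. With $v_\sat\approx7.6$\,km/s at $H_L=550$\,km and $v_\veh\lesssim 40$\,m/s, this ratio is below $1\%$ for elevations above a few degrees. The relative error in $T_c$ is therefore of the same order. At near-zero elevation the satellite term vanishes, so I would exclude elevations below the usual minimum elevation mask.

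\textbf{$T_c$ much smaller than $\td$.} I would plug in representative numbers. For a Ku/Ka carrier ($f_c\sim 10$--$20$\,GHz), $|f_D|$ is on the order of $10^5$\,Hz away from zenith, which gives $T_c$ on the order of microseconds, far below $\td=20$\,ms and hence below $\ts$. Near zenith the shift approaches $0$, so only the paper's hedge "verified in simulation" covers the claim uniformly. I would state it for elevations below a high-elevation threshold. The conclusion of many independent fades per tick then follows because $\td/T_c\gg1$.
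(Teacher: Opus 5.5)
The step that fails is the satellite term. With $\theta_{\sat}$ the \emph{elevation} angle, as the statement defines it, the line-of-sight projection you describe produces $\cos\theta_{\sat}$, not $\sin\theta_{\sat}$.

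\begin{itemize}
\item In the flat-horizon overhead pass, the orbital velocity is horizontal and the link is inclined at $\theta_{\sat}$ above the horizontal. Hence $|\dot d|=v_{\sat}\cos\theta_{\sat}$.
\item On a spherical Earth, the law of sines in the triangle (Earth centre, vehicle, satellite) gives $\sin\phi/d=\cos\theta_{\sat}/a_s$ for the central angle $\phi$. For a zenith pass this yields exactly $|\dot d|=\omega R_E\cos\theta_{\sat}=v_{\sat}(R_E/a_s)\cos\theta_{\sat}$.
\end{itemize}

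At zenith the velocity is orthogonal to the link and the satellite Doppler vanishes, whereas \eqref{eq:doppler_exact} makes it maximal. So your assertion that the construction is exact for a zenith pass and yields $v_{\sat}\sin\theta_{\sat}$ is false. Likewise, $R_E\cos\theta_{\sat}/a_s$ is not a curvature correction to $\sin\theta_{\sat}$; it is the whole projection factor.

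Your route therefore cannot prove \eqref{eq:doppler_exact} as stated. You have two options:
\begin{itemize}
\item Reinterpret $\theta_{\sat}$ as a zenith angle, or equivalently define $\sin\theta_{\sat}\triangleq|\hat{\mathbf u}\cdot\dot{\mathbf s}_m|/v_{\sat}$. Then the formula is just your range-rate decomposition plus a definition.
\item Treat the formula as a modeling postulate.
\end{itemize}
The paper takes the second course. It offers no derivation, posits \eqref{eq:doppler_exact} and \eqref{eq:coherence}, and defers $T_c\ll\td$ to simulation.

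Whichever reading you adopt must be used consistently, and your proposal uses both. The dominance paragraph follows the statement: the satellite term vanishes at low elevation and is removed by the mask. The coherence-time paragraph follows the physical law: the shift vanishes near zenith.

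Under the stated formula there is no zenith issue. For every $\theta_{\sat}\ge\theta_{\min}$, $|f_D^{\mathrm{total}}|\ge (f_c/c)(v_{\sat}\sin\theta_{\min}-v_{\veh})$. With the paper's own parameters ($f_c=1.67$\,GHz rather than Ku/Ka, $v_{\sat}\approx 7.6$\,km/s, $\theta_{\min}=25^\circ$, $v_{\veh}\le 15$\,m/s), this is about $1.8\times10^4$\,Hz. That gives $T_c\lesssim 24\,\mu$s and $\td/T_c\gtrsim 800$ uniformly over the visible arc, with no high-elevation cutoff needed.

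Under the physical $\cos$ law the situation reverses. The approximation $f_D^{\mathrm{total}}\approx f_D^{\sat}$ fails in a cone around zenith that the elevation mask does not exclude. There the total shift is of the order of the vehicle term (under $10^2$\,Hz here) and can cancel outright, so $T_c\ll\td$ no longer holds uniformly.

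Two smaller corrections:
\begin{itemize}
\item It is the envelope correlation, approximately $J_0^2(2\pi f_D\tau)$, that falls to $0.5$ at $\tau\approx 9/(16\pi f_D)$. The constant $0.423$ is the geometric mean of $1/f_D$ and $9/(16\pi f_D)$.
\item Clarke's formula concerns Doppler spread under isotropic scattering. Equating the single-path shift $|f_D^{\mathrm{total}}|$ with the maximum Doppler is therefore a modeling assumption you should state, not a derived fact.
\end{itemize}
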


\begin{corollary}[Doppler dominance and nonstationarity index]
\label{cor:doppler_dominance}
Define the Doppler ratio (valid for $\sin\theta_{\sat}>0$):
\begin{equation}
\rho_D \triangleq \left|\frac{f_D^\veh}{f_D^\sat}\right|
\le \frac{v_{\veh}}{v_{\sat}\sin\theta_{\sat}}.
\label{eq:rho_D}
\end{equation}
Under the elevation mask $\theta_{\sat}\ge\theta_{\min}>0$
(Section~\ref{sec:handover_model}), $\rho_D \ll 1$: satellite
motion dominates Doppler while vehicle speed mainly affects the
effective update rate~$\lambda_{\mathrm{eff}}(v)$.
The nonstationarity index
$\mathrm{NSI}(\tau)\triangleq \tau/T_c$ satisfies
$\mathrm{NSI}(\td)\gg 1$ and $\mathrm{NSI}(\ts)\gg 1$,
quantifying strong nonstationarity at both timescales.
\end{corollary}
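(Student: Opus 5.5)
The plan is to read both terms of the compound Doppler expression in Lemma~\ref{lem:doppler} as separate contributions, $f_D^\sat=(f_c/c)\,v_\sat\sin\theta_\sat(t)$ and $f_D^\veh=(f_c/c)\,v_\veh\cos\psi_\veh(t)$. The ratio bound in \eqref{eq:rho_D} then follows directly:
\begin{equation*}
\rho_D=\frac{v_\veh|\cos\psi_\veh(t)|}{v_\sat\sin\theta_\sat(t)}\le\frac{v_\veh}{v_\sat\sin\theta_\sat(t)}.
\end{equation*}
Here the common factor $f_c/c$ cancels, and $|\cos\psi_\veh|\le1$. The only subtlety is well-definedness, which requires $\sin\theta_\sat>0$; this is exactly the stated validity condition.

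Next, I will invoke the elevation mask $\theta_\sat\ge\theta_{\min}>0$. Since $\sin$ is increasing on $[0^\circ,90^\circ]$, we get $\sin\theta_\sat\ge\sin\theta_{\min}$, and hence
\begin{equation*}
\rho_D\le \frac{v_\veh}{v_\sat\sin\theta_{\min}}.
\end{equation*}
To justify $\rho_D\ll1$, I will plug in representative magnitudes. The orbital speed is $v_\sat=\sqrt{GM_E/a_s}\approx 7.6$\,km/s at $H_L=550$\,km. The highway speed is $v_\veh\lesssim 40$\,m/s, and a typical mask is $\theta_{\min}$ of a few tens of degrees. This gives $\rho_D=O(10^{-2})$. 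I will phrase the conclusion as holding whenever $v_\veh\ll v_\sat\sin\theta_{\min}$, which is the regime assumed in the lemma.

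\textbf{This step is the main obstacle.} The ``$\ll$'' statement is not purely mathematical; it depends on parameter ranges, so I would state it explicitly as a parametric condition. I will argue that the vehicle-speed effect therefore enters mainly through the update rate $\lambda_{\mathrm{eff}}(v)$ rather than through the channel. This is an interpretive remark, not something to be proven.

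For the NSI claim, I will use \eqref{eq:coherence}. First, $|f_D^{\mathrm{total}}|\ge |f_D^\sat|(1-\rho_D)$, so
\begin{equation*}
T_c\le\frac{0.423}{(f_c/c)\,v_\sat\sin\theta_{\min}\,(1-\rho_D)},
\end{equation*}
which is a uniform upper bound on $T_c$ over the visibility window. Evaluating it at a Ka/Ku-band $f_c$ gives $T_c$ in the microsecond range. Consequently, $\mathrm{NSI}(\td)=\td/T_c\gg1$ for $\td=20$\,ms. Because $\ts=N_\mathrm{ac}\td\ge\td$, it also follows that $\mathrm{NSI}(\ts)\ge\mathrm{NSI}(\td)\gg1$, since NSI is monotone in $\tau$.
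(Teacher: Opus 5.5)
Your proof is correct. The $\rho_D$ part matches the paper's reasoning: split the bracket of Lemma~\ref{lem:doppler} into $f_D^{\sat}$ and $f_D^{\veh}$, cancel $f_c/c$, and use $|\cos\psi_{\veh}|\le 1$.

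The NSI claim is where you take a different route. The paper gives no separate derivation and rests on Lemma~\ref{lem:doppler}'s assertion that $T_c$ is far shorter than $\td$ and $\ts$ ``(verified in simulation)''. You instead derive a uniform analytic upper bound on $T_c$ from $|f_D^{\mathrm{total}}|\ge|f_D^{\sat}|(1-\rho_D)$ and the elevation mask, then use monotonicity of $\mathrm{NSI}$ in $\tau$.

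Your route buys two things. It makes explicit the condition actually needed, $v_{\veh}\ll v_{\sat}\sin\theta_{\min}$, rather than the lemma's bare $v_{\veh}\ll v_{\sat}$. It also gives a bound valid over the whole visibility window rather than only on simulated trajectories. The paper's surrounding discussion additionally asserts, qualitatively, that $T_c^{\mathrm{residual}}\ll\td$ after bulk pre-compensation. Your bound concerns only the uncompensated $T_c$ of \eqref{eq:coherence}, but that is exactly what the corollary states, so nothing is missing.

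Two small fixes are needed:
\begin{enumerate}
\item To make the $T_c$ bound genuinely uniform in $t$, replace $1-\rho_D$ by its worst case $1-v_{\veh}/(v_{\sat}\sin\theta_{\min})$.
\item The paper's carrier is $f_c=1.67$\,GHz (L-band), not Ka/Ku. The conclusion still holds with the paper's parameters: $\theta_{\min}=25^\circ$, $v_{\sat}\approx 7.59$\,km/s and $v_{\veh}\le 15$\,m/s give $|f_D^{\sat}|\gtrsim 17.9$\,kHz and $T_c\lesssim 24\,\mu\mathrm{s}$. Hence $\mathrm{NSI}(\td)\gtrsim 800$ and $\mathrm{NSI}(\ts)\gtrsim 4\times 10^{4}$.
\end{enumerate}
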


Satellite Doppler dominance has two direct design consequences:
\emph{(i)}~TLE/SGP4 orbital predictions serve as a low-dimensional
channel predictor because coherence-time trends track satellite
geometry;
\emph{(ii)}~since $T_c\ll\td$, each tick averages over many
independent fading draws, motivating diffusion-based latent
augmentation (Section~\ref{sec:encoding}) to reduce critic variance.

Let $L_{v,k}(t)$ denote the large-scale path gain from satellite~$k$
to vehicle~$v$ at slot~$t$, $\rho_{v,k}(t,n)$ the small-scale fading
coefficient at tick~$n$, and $\mathbf{a}(\theta,\phi)$ the UPA
steering vector. The composite channel is
\begin{equation}
h_{v,k}(t,n) = \sqrt{L_{v,k}(t)}\,\rho_{v,k}(t,n)\,
\mathbf{a}(\theta_{v,k},\phi_{v,k})\,
e^{j2\pi f_D^\mathrm{total}(t)\cdot n\td},
\label{eq:channel}
\end{equation}
where the exponential captures \emph{residual} Doppler after bulk
pre-compensation~\cite{dai2025amdt}, and
$|\rho_{v,k}(t,n)|$ follows a Shadowed-Rician distribution with
light-shadowing parameters. The large-scale gain uses an NTN-style
decomposition:
\begin{equation}
\begin{aligned}
    P_{\mathrm{loss},v,k}&=F_{\mathrm{loss}}(d_{v,k},f_c)
    +\mathrm{SF}_{v,k}+\frac{L_{\mathrm{zenith}}(f_c)}{\sin\theta_{v,k}}, \\
    L_{v,k}(t) &= P_{\mathrm{loss},v,k}^{-1}.
\end{aligned}
\label{eq:pathloss}
\end{equation}

\begin{assumption}[Shadowed-Rician fading]\label{asm:fading}
$|\rho_{v,k}(t,n)|$ follows a Shadowed-Rician distribution with
light-shadowing parameters, consistent with LEO NTN channel
measurements.
\end{assumption}

\begin{assumption}[Independent tick-level fading]\label{asm:independence}
Fading coefficients $\rho_{v,k}(t,n)$ are independent across
ticks~$n$. This is justified by $T_c\ll\td$
(Lemma~\ref{lem:doppler}): the channel traverses many coherence
intervals per tick, making inter-tick correlation
$\mathcal{O}(T_c/\td)\ll 1$ negligible.
\end{assumption}

\begin{assumption}[Deterministic V2V one-hop delay]\label{asm:v2v}
V2V Mode-4 intra-platoon delay is modeled as a deterministic one-hop
latency dominated by the 20\,ms AoI tick, consistent with C-V2X
sidelink specifications.
\end{assumption}

Let $\gamma_{\mathrm{th}}$ be the minimum SNR for successful decoding.
The per-tick SINR under inter-PL interference is
\begin{equation}
    \begin{aligned}
        &\gamma_{v,k}(n) =
        \frac{P_v|\mathbf{w}_v^H\mathbf{h}_{v,k}(t,n)|^2}
        {\sigma^2 + I_v(t,n)}, \\
        &I_v(t,n)\triangleq\sum_{v'\neq v}
        P_{v'}|\mathbf{w}_{v'}^H\mathbf{h}_{v',k}(t,n)|^2,  
    \end{aligned}
\label{eq:sinr_full}
\end{equation}
where $n\in\{1,\ldots,N_\mathrm{ac}\}$. A tick is declared successful
if $\gamma_{v,k}(n)\ge\gamma_{\mathrm{th}}$. Under
Assumption~\ref{asm:independence},
Proposition~\ref{prop:refined_ho_increment} later yields a tractable
closed-form expected AoI increment.

Practical LEO terminals implement bulk Doppler pre-compensation from
ephemeris data. Three residual effects persist:
\emph{(i)}~pre-compensation updates at intervals $>T_c$ leave
intra-update phase rotation;
\emph{(ii)}~TLE ephemeris errors~\cite{park2025joint} induce residual
Doppler non-negligible relative to the vehicle component;
\emph{(iii)}~vehicle Doppler is typically uncompensated in broadcast
downlink.
Under these conditions $T_c^{\mathrm{residual}}\ll\td$, so
$\mathrm{NSI}(\td)\gg 1$ continues to hold.

\subsection{Handover Model and AoI Evolution}
\label{sec:handover_model}

The channel model above governs per-tick transmission success; link
\emph{availability} is additionally determined by satellite visibility
and handover dynamics. We present the model in four progressive steps:
visibility and handover classification, tick-level AoI evolution,
slot-level summary, and safety threshold mapping.

\subsubsection{Visibility and Handover Classification}

Satellite $k$ is visible to PL $v$ at slot $t$ if its elevation
exceeds a minimum mask:
\begin{equation}
\theta_{v,k}(t)\ge \theta_{\min},
\label{eq:visibility}
\end{equation}
where $\theta_{v,k}(t)$ is computed from TLE/SGP4 geometry. The
predicted visibility window is
\begin{equation}
W_v^k(t)=\sup\{\Delta t:\theta_{v,k}(t+\tau)\ge\theta_{\min},
\;\forall \tau\in[0,\Delta t]\}.
\label{eq:window}
\end{equation}

\begin{definition}[Discretionary vs.\ forced handover]
\label{def:handover_types}
A \emph{forced handover} occurs when the serving satellite exits the
visible set, i.e., $k_v(t{-}1)\notin\calK(t)$. A
\emph{discretionary handover} occurs when $k_v(t{-}1)\in\calK(t)$
but the scheduler proactively switches to a different satellite. The
binary handover indicator and its decomposition are
\begin{equation}
H_v(t) = \indic\{k_v(t)\neq k_v(t{-}1)\}
       = H_v^\mathrm{forced}(t)+H_v^\mathrm{disc}(t).
\label{eq:handover}
\end{equation}
\end{definition}

\subsubsection{Tick-Level AoI Evolution}
\label{sec:aoi_evo}

Let $\delta_v^{(t,n)}\in\{0,1\}$ indicate whether PL~$v$ is
scheduled at tick~$n$ of slot~$t$. During a connected slot
($H_v(t){=}0$), the priority action $\beta_{v,m}^t$ fixes
$\delta_v^{(t,n)}=\beta_{v,m}^t$ for all $n$. During a handover slot
($H_v(t){=}1$), $\delta_v^{(t,n)}=0$ for the first
$n_\ho^{(t)}=\lfloor\tau_\ho^{(t)}/\td\rfloor\ge 1$ outage ticks
(positivity guaranteed by~\eqref{eq:tick_rule}), and may become~$1$
afterward if reconnection succeeds.

The AoI evolves at every tick as
\begin{equation}
\AoI_{v,m}^{(t,n+1)} =
\begin{cases}
1, & \text{successful update},\\
\AoI_{v,m}^{(t,n)} + 1, & \text{otherwise.}
\end{cases}
\label{eq:aoi_ticks}
\end{equation}
The ``otherwise'' branch covers two distinct causes---deterministic
outage failure ($n\le n_\ho^{(t)}$) and stochastic transmission
failure ($\gamma_{v,k}(n)<\gamma_\mathrm{th}$ under
Assumption~\ref{asm:fading})---both producing a unit AoI increment.
This distinction is preserved in the stochastic analysis of
Proposition~\ref{prop:refined_ho_increment}.

For a connected slot, the per-tick and slot-level success probabilities
are
\begin{equation}
\begin{aligned}
    &p_{v,k}(n)=\Pr(\gamma_{v,k}(n)\ge \gamma_\mathrm{th}),\\
&P_{\mathrm{succ},v}(t)=1-\prod_{n=1}^{N_\mathrm{ac}}
\bigl(1-\delta_v^{(t,n)}p_{v,k}(n)\bigr).
\end{aligned}
\label{eq:psucc}
\end{equation}

\subsubsection{Slot-Level AoI Summary}

The tick-level law~\eqref{eq:aoi_ticks} serves as the ground truth for
safety checking via~\eqref{eq:safety_vq}. For the DPP drift analysis
(Lemma~\ref{lem:dpp_bound}), we also need a coarser slot-level
summary.

\begin{assumption}[Slot-summary approximation]\label{asm:slot_summary}
When at least one tick succeeds in a connected slot, the slot-end AoI
is approximated as $\AoI_{v,m}^{t+1}=1$ (best-case reset). This is
optimistic for the connected phase but conservative for the DPP drift
bound, which uses only the worst-case increment~$N_\mathrm{ac}$.
\end{assumption}

Under this approximation, the slot-end AoI is
\begin{equation}
\AoI_{v,m}^{t+1} =
\begin{cases}
1, & H_v(t){=}0,\;\exists n:\delta_v^{(t,n)}{=}1,\;
     \gamma_{v,k}(n){\ge}\gamma_\mathrm{th},\\
\AoI_{v,m}^t + N_\mathrm{ac},
  & \text{otherwise (including }H_v(t){=}1\text{)},
\end{cases}
\label{eq:aoi_slot}
\end{equation}
and the expected per-slot AoI change under a connected slot is
\begin{equation}
\E[\Delta\AoI\mid H_v(t){=}0]
= N_\mathrm{ac}
-P_{\mathrm{succ},v}(t)\bigl(N_\mathrm{ac}+\AoI_{v,m}^{t}-1\bigr),
\label{eq:expected_increment}
\end{equation}
which explicitly links tick-level fading success to slot-level drift.

\subsubsection{Safety Thresholds in AoI Ticks}

Safety deadlines are mapped to integer tick counts following~\cite{zhang2024drl}:
\begin{equation}
n_{v,m}^\safe = \left\lfloor \frac{\Delta_{v,m}^\safe}{\td}
\right\rfloor,\quad m \in \{1,2,3\}.
\label{eq:thresholds}
\end{equation}
Under the tick model, $n_\ho^{(t)}$ is an exact integer with floor
error at most one tick---small relative to $n_{v,1}^\safe$ by the
design rule~\eqref{eq:tick_rule} (see
Remark~\ref{rem:floor_error}). Safety is checked at every tick:
a violation occurs whenever
$\AoI_{v,m}^{(t,n)}>n_{v,m}^\safe$.

\subsection{Power Accounting and End-to-End AoI}
\label{sec:e2e}

The AoI evolution law above governs freshness at the PL level. We now
complete the system model with slot-level power accounting and follower
end-to-end AoI.
\textbf{Slot-level transmit indicator.}
Let $\delta_v(t)\in\{0,1\}$ be one if PL~$v$ is scheduled for
satellite data transmission in slot~$t$ (so radiated power $P_v$
applies to payload traffic), and zero if only handover signaling or
idle behavior occurs. Thus $\delta_v(t)=1$ whenever
$\max_{n}\delta_v^{(t,n)}=1$ for that slot, linking~\eqref{eq:power}
to the tick-level scheduling indicators in Section~\ref{sec:aoi_evo}.

\begin{equation}
P_\mathrm{tot}(t) = \sum_{v\in\calV}\delta_v(t)P_v
+ \sum_{v\in\calV}H_v(t)P_\ho,
\label{eq:power}
\end{equation}
where $P_\ho$ is handover signaling power.

Only PLs maintain the satellite uplink. Follower $f_j^{(v)}$
in platoon $v$ receives updates via V2V Mode~4
intra-platoon~\cite{parvini2021aoi}. The end-to-end AoI at
follower $f$ at tick $n$ is:
\begin{equation}
\AoI_{f,m}^{(t,n)} = \AoI_{v,m}^{(t,n)} + D_\mathrm{v2v}^{(f)},
\label{eq:e2e}
\end{equation}
where $D_\mathrm{v2v}^{(f)}$ is the integer number of AoI ticks
spent from PL transmission until follower $f$ receives the CAM,
modeled as one-hop Mode~4 delay (Assumption~\ref{asm:v2v}).
This additive model is exact when the PL broadcasts every tick and
follower reception delay is deterministic; it slightly overestimates
AoI when the PL withholds transmission during the $D_\mathrm{v2v}^{(f)}$
window.

The two-timescale evolution law~\eqref{eq:aoi_ticks} and the
discretionary/forced handover split prepare the problem formulation in
Section~\ref{sec:opt_problem}: the former provides exact integer outage
counts needed for the safety constraint, while the latter ensures the
handover budget is always feasible by construction.
Section~\ref{sec:analysis} then derives exact spike bounds that
analytically support the constraints in that problem.

\subsection{Joint Optimization Problem}
\label{sec:opt_problem}
\label{sec:problem}

Having established the two-timescale physical model, we now formulate the joint optimization
problem that captures four coupled requirements whose necessity follows
from the gap analysis in Section~\ref{sec:intro}.
\textbf{Requirement 1:} multi-priority weighted AoI minimization with
$w_1>w_2>w_3>0$ to distinguish safety-critical and non-critical flows.
\textbf{Requirement 2:} time-average safety enforcement via
\eqref{eq:opt_safety} rather than per-slot chance constraints.
\textbf{Requirement 3:} discretionary-only handover budgeting
\eqref{eq:opt_ho} to preserve feasibility under forced geometry-driven
reassociation.
\textbf{Requirement 4:} explicit two-timescale coupling over $(t,n)$,
which makes the control non-separable per slot and motivates DPP-guided
MARL.

Now, we formulate the joint optimization problem.
Problem~\eqref{eq:opt} is non-convex due to discrete scheduling and
handover variables, temporally coupled by long-run constraints, and
multi-timescale through $(t,n)$ and virtual queues. Given fixed discrete
decisions, the continuous power variable enters rates in a concave
(log-rate) form, while the beamformer is fixed by the closed-form MRT
rule~\eqref{eq:mrt} rather than a non-convex max min program.
The overall policy optimization remains non-separable per slot, which
motivates DPP-guided MARL.

\begin{subequations}
\label{eq:opt}
\begin{align}
\min_{\pi}\; &
\lim_{T\to\infty}\frac{1}{T\cdot N_\mathrm{ac}}
\sum_{t=0}^{T-1}\sum_{n=1}^{N_\mathrm{ac}}
\sum_{v\in\calV}\sum_{m\in\calM}
w_m\,\AoI_{v,m}^{(t,n)}
\label{eq:opt_obj} \\
\text{s.t.}\;
& \lim_{T\to\infty}\frac{1}{T\cdot N_\mathrm{ac}}
\sum_{t,n}\indic\!\bigl[\AoI_{v,m}^{(t,n)}
> n_{v,m}^\safe\bigr]
\le \epsilon_m,\;\forall v,m,
\label{eq:opt_safety} \\
& \lim_{T\to\infty}\frac{1}{T}\sum_t\sum_v
H_v^\mathrm{disc}(t) \le N_\ho^\mathrm{disc},
\label{eq:opt_ho} \\
& \lim_{T\to\infty}\frac{1}{T}\sum_t P_\mathrm{tot}(t)
\le P_{\max},
\label{eq:opt_power} \\
& R_{v,k}^t \ge R_{\min,m}\;\forall v,m,t,
\label{eq:opt_rate} \\
& k_v(t)\in\calK(t),\;\forall v,t.
\label{eq:opt_visible}
\end{align}
\end{subequations}

Priority weights $w_m$ ($w_1>w_2>w_3>0$) and violation
tolerances $\epsilon_m\in\{0.01,0.05,0.20\}$ for
$m\in\{1,2,3\}$ are given.
Constraint~\eqref{eq:opt_ho} limits the average rate of
\emph{discretionary} handovers only. The degenerate policy ``never
switch discretionarily'' achieves $\sum_v H_v^\mathrm{disc}(t)=0
\le N_\ho^\mathrm{disc}$, so the constraint is always feasible.
Forced handovers are excluded from the budget and are handled by
re-associating to the best available satellite.
Constraint~\eqref{eq:opt_safety} is a \emph{time-average violation
frequency} (empirical tick fraction with
$\AoI_{v,m}^{(t,n)}>n_{v,m}^\safe$), not a per-tick chance constraint
$P(\AoI_{v,m}^{(t,n)}>n_{v,m}^\safe)\leq\epsilon_m$. The latter would
typically require a stationary distribution for the AoI process,
which is difficult to characterize under time-varying handovers.
Virtual queue~\eqref{eq:safety_vq} accumulates violation deficit when
$\AoI_{v,m}^{(\tau)}>n_{v,m}^{\safe}$ and drains at rate $\epsilon_m$,
strong stability yields the long-run bound of
Theorem~\ref{thm:safety_queue} for ergodic policies, while $Z_{v,m}$
provides a runtime-observable tightness signal in~\eqref{eq:state}.
\section{Safety Constraint via Virtual Queues and DPP Template}

This section establishes the online enforcement mechanism for the three
tiered safety budgets in Problem~\eqref{eq:opt}. The key result is that
no distributional assumption on the AoI process is needed, only bounded
per-slot increments and a Slater-feasible baseline policy suffice. We
first define safety virtual queues (Theorem~\ref{thm:safety_queue}) and
show that strong stability implies time-average violation compliance
without requiring a closed-form AoI distribution. We then derive the
drift-plus-penalty template (Lemma~\ref{lem:dpp_bound}) that decomposes
the long-run constrained problem into per-slot decisions, producing the
control skeleton that Section~\ref{sec:algorithm} instantiates as
SafeScale-MATD3.

\begin{theorem}[Safety Virtual Queue Equivalence]
\label{thm:safety_queue}
Define the safety virtual queue:
\begin{equation}
    \small
Z_{v,m}(\ell+1) = \max\!\left\{Z_{v,m}(\ell)
+\indic\!\bigl[\AoI_{v,m}^{(\ell)}>n_{v,m}^\safe\bigr]
-\epsilon_m,\;0\right\},
\label{eq:safety_vq}
\end{equation}
where $\ell\in\mathbb{Z}_{\ge 0}$ indexes AoI ticks globally
(distinct from the coherence-time $T_c(t)$ and the tick
\emph{duration} $\td=\tau_{\mathrm{ac}}$).
Under the following conditions:
\begin{itemize}
\item[(C1)] \textbf{(Slater / strictly feasible averages).}
There exists a stationary randomized policy $\pi^\star$ and
$\delta>0$ such that
$\E_{\pi^\star}[\indic[\AoI_{v,m}^{(\ell)}>n_{v,m}^\safe]]
\le \epsilon_m-\delta$ for all $(v,m)$.
\item[(C2)] \textbf{(Bounded slot aggregates).}
Per-slot increments of $Z_{v,m}$ are uniformly bounded: aggregating
\eqref{eq:safety_vq} over $N_\mathrm{ac}$ ticks yields
$|Z_{v,m}^{t+1}-Z_{v,m}^t|\le N_\mathrm{ac}$ (binary arrivals with
$\max$-projection).
\item[(C3)] \textbf{(Ergodicity).}
The controlled AoI process has well-defined long-run
\emph{Ces\`aro averages}: for each $(v,m)$,
$\frac{1}{T}\sum_{t=0}^{T-1}\E[V_{v,m}(t)]$ converges as
$T\to\infty$, e.g.\ when the joint (AoI, queue, channel) chain is
ergodic under $\pi^\star$ and the controlled policy satisfies a
uniform-integrability condition.
\end{itemize}
If $Z_{v,m}(\ell)$ is strongly stable
($\lim_{\ell\to\infty}\E[Z_{v,m}(\ell)]/\ell=0$),
then the time-average violation frequency satisfies
\[
\limsup_{T\to\infty}\frac{1}{TN_\mathrm{ac}}
\sum_{t=0}^{T-1}\E[V_{v,m}(t)]\le\epsilon_m.
\]
Under the additional mixing condition in~(C3), the time average
converges almost surely to the same limit by the ergodic theorem for
uniformly bounded sequences.
\end{theorem}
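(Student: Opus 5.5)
The argument is the standard Neely-type virtual-queue argument applied at tick resolution: the max-projection in \eqref{eq:safety_vq} gives a one-step inequality that telescopes, and strong stability removes the boundary term. Conditions (C1) and (C2) are needed only to make the strong-stability hypothesis achievable through the DPP template (Lemma~\ref{lem:dpp_bound}); they are not needed for the implication itself.

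\emph{Step 1 (one-step inequality).} From $\max\{x,0\}\ge x$, every tick $\ell$ satisfies
\[
Z_{v,m}(\ell+1)\ \ge\ Z_{v,m}(\ell)+\indic\bigl[\AoI_{v,m}^{(\ell)}>n_{v,m}^\safe\bigr]-\epsilon_m .
\]

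\emph{Step 2 (telescope and take expectations).} Summing Step~1 over $\ell=0,\dots,L-1$ with $L=TN_\mathrm{ac}$ gives
\[
\sum_{\ell=0}^{L-1}\indic\bigl[\AoI_{v,m}^{(\ell)}>n_{v,m}^\safe\bigr]\ \le\ L\epsilon_m+Z_{v,m}(L)-Z_{v,m}(0).
\]
Group the ticks by slot and write $V_{v,m}(t)=\sum_{n=1}^{N_\mathrm{ac}}\indic\bigl[\AoI_{v,m}^{(t,n)}>n_{v,m}^\safe\bigr]$. The left-hand side is then $\sum_{t=0}^{T-1}V_{v,m}(t)$. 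Taking expectations and dividing by $L$, using $Z_{v,m}(0)\ge 0$, yields
\[
\frac{1}{TN_\mathrm{ac}}\sum_{t=0}^{T-1}\E[V_{v,m}(t)]\ \le\ \epsilon_m+\frac{\E[Z_{v,m}(TN_\mathrm{ac})]}{TN_\mathrm{ac}} .
\]

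\emph{Step 3 (limit).} Take $\limsup_{T\to\infty}$. Strong stability says $\E[Z_{v,m}(\ell)]/\ell\to 0$ along all $\ell$, so in particular along the subsequence $\ell=TN_\mathrm{ac}$. This gives the claimed bound. The expectation is finite for every $\ell$, because (C2) gives $Z_{v,m}(\ell)\le Z_{v,m}(0)+\ell$.

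\emph{Step 4 (almost-sure version, the main obstacle).} I expect this step to need the most care, because strong stability is a statement in expectation. I would first use (C3): the Ces\`aro averages of $\E[V_{v,m}(t)]$ converge, and each $V_{v,m}(t)\in[0,N_\mathrm{ac}]$ is uniformly bounded. Under the mixing/ergodicity assumption, the ergodic theorem for bounded stationary or ergodic sequences then makes the sample average converge almost surely to the same limit, which is at most $\epsilon_m$ by Step~3.

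As a more robust alternative, one could avoid ergodicity altogether and use the pathwise telescoped inequality from Step~2. This requires showing $Z_{v,m}(\ell)/\ell\to 0$ almost surely. That would follow from the bounded increments in (C2), the Slater slack $\delta$ in (C1), and a Lyapunov drift bound giving $\E[Z^2]=O(\ell)$ under the DPP policy, combined with Borel--Cantelli. I would state the a.s.\ claim under (C3) as the paper does, and remark that the pathwise route is the fallback.

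Finally, I would note how (C1) and (C2) enter. (C2) feeds the constant $B$ in the drift bound of Lemma~\ref{lem:dpp_bound}. (C1) supplies the comparison policy $\pi^\star$ whose $-\delta$ slack makes the drift negative once the queues are large, which yields strong stability of $Z_{v,m}$. So the theorem's hypothesis is attainable by the DPP controller.
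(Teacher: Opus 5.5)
Your argument is correct, but it is not the route the paper writes down. The paper calls its proof a reflected-queue Lyapunov argument. It aggregates the queue over one slot into $Z_{v,m}^{t+1}-Z_{v,m}^t\le V_{v,m}(t)-N_\mathrm{ac}\epsilon_m$, squares to obtain a one-slot quadratic drift bound with a constant $B_Z$, and then sums over slots and invokes strong stability.

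You instead use only the linear inequality $Z_{v,m}(\ell+1)\ge Z_{v,m}(\ell)+\indic[\AoI_{v,m}^{(\ell)}>n_{v,m}^\safe]-\epsilon_m$ and telescope it. This is the standard ``mean-rate stability implies constraint satisfaction'' step. Your direction is the one that works. Since $\max\{x,0\}\ge x$, the projection can only raise the queue, so the paper's upper-bound version fails whenever the projection is active: take $Z_{v,m}^t=0$ and $V_{v,m}(t)=0$, so the left side is $0$ and the right side is $-N_\mathrm{ac}\epsilon_m$.

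Likewise, the quadratic drift inequality bounds $\E[Z_{v,m}^2]$ from above. Summing it produces nothing that upper-bounds $\sum_t\E[V_{v,m}(t)]$, so the deduction from strong stability really rests on your linear lower bound. The squaring step is useful for the complementary claim: that a DPP controller makes $Z_{v,m}$ strongly stable, using (C1) and (C2). You are right to separate that from the implication proved here, which needs neither condition. One minor point: finiteness of $\E[Z_{v,m}(\ell)]$ follows from the update rule itself, $Z_{v,m}(\ell+1)\le Z_{v,m}(\ell)+1-\epsilon_m$, together with $\E[Z_{v,m}(0)]<\infty$. You do not need (C2) for it.

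On the almost-sure claim, you and the paper are at the same level of rigor. (C3) as literally stated concerns only Ces\`aro means of expectations. An ergodic theorem therefore needs the additional stationarity or mixing of the sequence $V_{v,m}(t)$ itself, which the statement only alludes to.

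If you pursue the pathwise fallback, note that $\E[Z_{v,m}(\ell)^2]=O(\ell)$ only gives $\Pr(Z_{v,m}(\ell)>\eta\ell)=O(1/\ell)$, which is not summable. Apply Borel--Cantelli along $\ell=k^2$, and interpolate between consecutive squares using the per-tick increment bound $1-\epsilon_m<1$. Even then, this route depends on the specific DPP policy rather than an arbitrary strongly stable one. It also gives only an almost-sure bound $\limsup\le\epsilon_m$ on the empirical violation frequency, not the almost-sure convergence asserted in the statement.
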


\begin{proof}
The argument is the standard reflected-queue Lyapunov proof. Aggregate
\eqref{eq:safety_vq} over one slot to obtain
$Z_{v,m}^{t+1}-Z_{v,m}^t\le V_{v,m}(t)-N_\mathrm{ac}\epsilon_m$ with
$0\le V_{v,m}(t)\le N_\mathrm{ac}$. Squaring and taking conditional
expectation yields a one-slot drift inequality with bounded constant
$B_Z$:
\[
\E\!\left[(Z_{v,m}^{t+1})^2-(Z_{v,m}^{t})^2\mid\mathcal{F}_t\right]
\le B_Z + 2Z_{v,m}^t\!\left(V_{v,m}(t)-N_\mathrm{ac}\epsilon_m\right).
\]
Summing over slots and using strong stability
($\E[Z_{v,m}(\ell)]/\ell\to0$) gives
$\limsup_{T\to\infty}\frac{1}{T}\sum_t\E[V_{v,m}(t)]
\le N_\mathrm{ac}\epsilon_m$. The almost-sure statement follows from
\textup{(C2)} and \textup{(C3)} by ergodic arguments.
\end{proof}

The virtual queue $Z_{v,m}$ operates like a credit account: each tick
in which AoI exceeds the safety threshold makes a unit withdrawal, while
the budget rate~$\epsilon_m$ provides a guaranteed deposit. Strong
stability ($\E[Z_{v,m}(\tau)]/\tau\to 0$) is the requirement that
withdrawals never permanently outpace deposits, i.e., the account
never becomes permanently overdrawn. This is equivalent to the
time-average violation frequency staying below~$\epsilon_m$, which is
the safety constraint~\eqref{eq:opt_safety}. The result requires no
closed form for the AoI distribution under handovers; only bounded
increments (C2) and a slack-generating baseline (C1) are needed.
Theorem~\ref{thm:safety_queue} replaces hard per-slot probabilistic
feasibility with a runtime-observable backlog~$Z_{v,m}$ that the policy
can use as a tightness signal in its state vector~\eqref{eq:state}.
Virtual queues differ from per-slot chance constraints in three ways:
\emph{(i)} no explicit stationary AoI distribution is required beyond
ergodic operation; \emph{(ii)} $Z_{v,m}(t)$ is runtime observable and
serves as a direct tightness signal in state/action updates; and
\emph{(iii)} the penalty weight $V$ gives an explicit
AoI constraint tradeoff between objective minimization and queue
draining speed. Proposition~\ref{prop:slater} below verifies that the
Slater condition holds under the adopted channel parameters.

\begin{proposition}[Slater feasibility under SafeScale-MATD3]
\label{prop:slater}
Let $f_\mathrm{forced}\triangleq\lim_{T\to\infty}
\frac{1}{T}\sum_t\sum_v H_v^\mathrm{forced}(t)$ be the long-run
forced-handover rate. Under Assumptions~\ref{asm:fading} and \ref{asm:v2v}
and the parameter regime
$f_\mathrm{forced}\le(\epsilon_m-\delta)N_\mathrm{ac}/\mu_n$
for some $\delta>0$, Condition~(C1) of Theorem~\ref{thm:safety_queue}
holds for all three priority classes $m\in\{1,2,3\}$.
Under bounded increments in~(C2), $\epsilon$-greedy exploration yields
well-defined Ces\`aro averages for the controlled process.
\end{proposition}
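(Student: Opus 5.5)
The plan is to exhibit an explicit stationary randomized baseline policy $\pi^\star$ and bound its long-run violation frequency per class by a sum of two contributions: ticks lost to forced-handover outages, and ticks lost to fading failures while connected.

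\textbf{Baseline policy.} The baseline $\pi^\star$ is defined as follows:
\begin{itemize}
\item it never performs discretionary handovers, so that (\ref{eq:opt_ho}) is satisfied trivially;
\item on a forced handover it re-associates to the highest-elevation visible satellite;
\item it transmits at every tick of each connected slot ($\delta_v^{(t,n)}=1$), using the MRT beamformer at full power $P_v$.
\end{itemize}

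\textbf{Connected ticks.} By Assumption~\ref{asm:independence} the tick-level success indicators are independent with $p_{v,k}(n)\ge p_{\min}>0$. Under Assumption~\ref{asm:fading}, the elevation mask $\theta_{\min}$ bounds the path loss in (\ref{eq:pathloss}), so $p_{\min}$ is bounded below. The AoI between resets is then stochastically dominated by a geometric run. Hence the fraction of connected ticks with $\AoI_{v,m}>n_{v,m}^\safe$ is at most $(1-p_{\min})^{n_{v,m}^\safe}$. Since $n_{v,m}^\safe\ge n_1^\safe\ge2$ by Proposition~\ref{prop:tick_selection}, this term is small. The V2V offset of Assumption~\ref{asm:v2v} only shifts thresholds by a deterministic $D_\mathrm{v2v}$, which can be absorbed into $n^\safe$.

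\textbf{Outage ticks.} Each forced handover contributes at most $\mu_n$ violating ticks in expectation. Here I interpret $\mu_n$ as the mean outage length plus the geometric recovery tail, i.e., the expected number of ticks until AoI returns below threshold. Dividing by $N_\mathrm{ac}$ ticks per slot, the long-run violation fraction from handovers is at most $f_\mathrm{forced}\mu_n/N_\mathrm{ac}$. The assumed regime then makes this at most $\epsilon_m-\delta$, after folding the fading term into the slack. One either shrinks $\delta$ or assumes the fading term is absorbed into $\mu_n$; I will choose the latter and define $\mu_n$ accordingly.

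\textbf{Stationarity and Ces\`aro averages.} Visibility is deterministic and periodic from TLE/SGP4, so $\pi^\star$ induces a Markov chain on (orbital phase, AoI truncated at $\max_m n_{v,m}^\safe+1$, channel). This chain has a finite effective state space and is aperiodic, because success probabilities are positive. The expectation in (C1) is therefore well defined as a stationary average and equals the renewal-reward ratio computed above. For the $\epsilon$-greedy claim, mixing any policy with $\pi^\star$ at probability $\epsilon>0$ preserves irreducibility. Together with the bounded increments of (C2), the standard renewal-reward argument (bounded rewards, finite mean cycle length) gives convergent Ces\`aro averages.

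\textbf{Main obstacle.} The hardest step is rigorously controlling the interaction term: an outage raises AoI, and the post-reconnection recovery then depends on the fading successes. I would handle this with a renewal decomposition at successful-update epochs. Each cycle containing a forced handover is bounded by $n_\ho+\mathrm{Geom}(p_{\min})$ violating ticks, which yields $\mu_n=\E[n_\ho]+1/p_{\min}$ as the natural constant.
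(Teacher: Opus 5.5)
\textbf{Gap: the connected-phase fading term.} The hypothesis $f_\mathrm{forced}\le(\epsilon_m-\delta)N_\mathrm{ac}/\mu_n$ only caps the \emph{handover-induced} violation frequency at $\epsilon_m-\delta$. Your fading contribution $\eta_m\triangleq(1-p_{\min})^{n_{v,m}^\safe}$ is a per-tick rate that accrues even in slots far from any handover, so it cannot be folded into a per-handover constant. Matching $f_\mathrm{forced}\mu_n'/N_\mathrm{ac}=f_\mathrm{forced}\mu_n/N_\mathrm{ac}+\eta_m$ forces $\mu_n'=\mu_n+\eta_m N_\mathrm{ac}/f_\mathrm{forced}$. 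That constant depends on $f_\mathrm{forced}$ and diverges as $f_\mathrm{forced}\to0$, which makes the hypothesis circular. Also, the constant $\E[n_\ho]+1/p_{\min}$ you end with does not contain $\eta_m$ at all. Invoking $n_{v,m}^\safe\ge2$ does not help, since $(1-p_{\min})^2$ is small only when $p_{\min}$ is close to one.

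In fact the statement does not follow from its hypotheses. Take $f_\mathrm{forced}$ arbitrarily small, so the regime holds for every $\delta<\epsilon_m$. Suppose the best visible satellite at full power gives per-tick success probability $p<1-\epsilon_1^{1/n_{v,1}^\safe}$ (about $0.6$ for $n_{v,1}^\safe=5$); Assumption~\ref{asm:fading} does not rule this out. Under any policy, a tick is a violation whenever the preceding $n_{v,1}^\safe$ ticks all failed. So the $m{=}1$ violation frequency is at least $(1-p)^{n_{v,1}^\safe}>\epsilon_1$, and (C1) fails.

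You need an explicit link-budget hypothesis such as $\eta_m<\delta$. With it, your counting argument gives (C1) with slack $\delta-\eta_m$, reading $\mu_n$ as you do: the expected number of violating ticks per forced handover, including the recovery tail. Note also that your $\pi^\star$ runs every PL at full power on every tick, so it violates \eqref{eq:opt_power} (total $N_vP_{\max}$ against a budget of $P_{\max}$). That is harmless for (C1) as stated. But if $\pi^\star$ is meant as a Slater point for the whole DPP, the link-budget condition must hold at a power level compatible with \eqref{eq:opt_power}.

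The paper gives no formal proof. The paragraph after the statement plugs measured handover statistics into the handover term and asserts slack for $m{=}3$, so it silently drops $\eta_m$ as well. Your argument is the natural formalization of that accounting, and the missing assumption is needed in any rigorous version.

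\textbf{Second problem: the ergodicity paragraph overclaims.} Under $\pi^\star$ the orbital phase is a deterministic, generally non-periodic coordinate, since the geometry seen by a moving vehicle need not repeat. So there is no finite irreducible aperiodic chain to appeal to. For the controlled SafeScale-MATD3 process, the state contains the unbounded backlogs $Z_{v,m}$ and the actor changes during training. Your ``finite mean cycle length'' is therefore exactly the positive recurrence that would have to be proved, and bounded increments (C2) alone do not make Ces\`aro averages converge.

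For (C1) you do not need stationarity at all. Since $f_\mathrm{forced}$ is assumed to exist as a limit, you can bound $\limsup_{T\to\infty}\frac{1}{TN_\mathrm{ac}}\sum_{t}\E[V_{v,m}(t)]$ directly by $f_\mathrm{forced}\mu_n/N_\mathrm{ac}+\eta_m$ through your per-handover counting. That is precisely the time-average form used in the proof of Theorem~\ref{thm:safety_queue}. The $\epsilon$-greedy sentence can only be established for a frozen stationary policy under an added recurrence assumption; the paper asserts it without argument. Finally, you use Assumption~\ref{asm:independence}, which is not among the proposition's listed hypotheses, so state it explicitly.
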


For $m{=}3$ ($\epsilon_3=0.20$), a connected-phase baseline policy
easily achieves slack because the typical violation rate is well
below~20\%.
For $m{=}1$ ($\epsilon_1=0.01$), the sufficient condition becomes
$f_\mathrm{forced}\le 0.005\cdot N_\mathrm{ac}/\mu_n$. Under measured
Starlink handover statistics~\cite{park2025joint}
($\mu_n\approx2$ to $5$ ticks, forced rate $\approx 1$ per 60 to 100
slots), this holds with $\delta\approx0.005$. Proactive timing
(Section~\ref{sec:proactive}) and discretionary budgeting
\eqref{eq:opt_ho} keep forced-handover exposure within this bound.

\begin{lemma}[Drift-plus-penalty template bound]
\label{lem:dpp_bound}
Let
$L(t)=\frac{1}{2}\!\left(Q_P^2(t)+Q_H^2(t)+
\sum_{v,m}Z_{v,m}^2(t)\right)$.
For any admissible control policy at slot $t$, let
$V_{v,m}(t)\triangleq\sum_{n=1}^{N_\mathrm{ac}}
\indic[\AoI_{v,m}^{(t,n)}>n_{v,m}^\safe]\in\{0,1,\ldots,N_\mathrm{ac}\}$
be the per-slot violation \emph{count} (the same quantity as in the
proof of Theorem~\ref{thm:safety_queue}). There exists a finite
constant $B$ (Appendix) such that
\begin{align}
    \small
&\E[L(t{+}1)-L(t)\mid\mathcal{F}_t]
+V\,\E[\Theta(t)\mid\mathcal{F}_t] \notag\\
&\le B
+Q_P(t)\bigl(P_\mathrm{tot}(t)-P_{\max}\bigr)
+Q_H(t)\Bigl(\sum_v H_v^\mathrm{disc}(t)-N_\ho^\mathrm{disc}\Bigr)
\notag\\
&\quad+\sum_{v,m}Z_{v,m}(t)\!
\left(V_{v,m}(t)-N_\mathrm{ac}\epsilon_m\right)
+V\,\Theta(t),
\label{eq:dpp_bound}
\end{align}
where $\Theta(t)$ is the per-slot contribution to the objective in
\eqref{eq:opt_obj}. Minimizing the RHS yields the online control rule.
The safety term
$Z_{v,m}(t)\bigl(V_{v,m}(t)-N_\mathrm{ac}\epsilon_m\bigr)$ penalizes
slots in which the tick violation count exceeds its time-average budget
$N_\mathrm{ac}\epsilon_m$ (equivalently,
$V_{v,m}(t)/N_\mathrm{ac}>\epsilon_m$).
\end{lemma}

A valid bounded-drift constant is
\begin{equation}
B = \frac{1}{2}\Bigl[
(\Delta_P^{\max})^2
+(\Delta_H^{\max})^2
+\sum_{v,m}\bigl(N_\mathrm{ac}(1-\epsilon_m)\bigr)^2
\Bigr],
\label{eq:B_explicit}
\end{equation}
with $\Delta_P^{\max}\triangleq\sup_t|P_\mathrm{tot}(t)-P_{\max}|$ and
$\Delta_H^{\max}\triangleq\sup_t\bigl|\sum_v H_v^\mathrm{disc}(t)
-N_\ho^\mathrm{disc}\bigr|$ under bounded actions.

\begin{proof}
This is the standard DPP drift expansion. Square each queue update in
\eqref{eq:vq_updates}, aggregate the safety queue over
$N_\mathrm{ac}$ ticks using $V_{v,m}(t)$, sum across queues, and take
conditional expectations. Bounded one-slot increments yield a finite
constant $B$ (explicit form in \eqref{eq:B_explicit}), and rearranging
gives~\eqref{eq:dpp_bound}; equivalently,
\[
\Delta(t)+V\Theta(t)\le
B+\sum_q Q_q(t)\Delta_q(t)+V\Theta(t),
\]
where $q\in\{P,H,(v,m)\}$ indexes all virtual queues.
\end{proof}

\begin{remark}[Policy-dependence of $B$]
The constant $B$ in~\eqref{eq:dpp_bound} (derived explicitly in the
Appendix) depends on the per-slot increment bounds
$\Delta_P^{\max}\triangleq\sup_t|P_\mathrm{tot}(t)-P_{\max}|$ and
$\Delta_H^{\max}$, which are determined by the policy class rather than
by system parameters alone. During MARL training, exploration noise
may transiently violate these bounds; in practice, we clip power actions
to $[0,P_{\max}]$ and handover indicators to $\{0,1\}$, which ensures
$\Delta_P^{\max}\le P_{\max}$ and $\Delta_H^{\max}\le|\calV|$ uniformly
throughout training, keeping $B$ finite and consistent.
\end{remark}


Theorem~\ref{thm:safety_queue} is enforced online at the AoI-tick level
through the update law~\eqref{eq:safety_vq}. This module is policy-agnostic
and exposes a runtime tightness signal via $Z_{v,m}$ for the scheduler.
\label{sec:vq_operational}

Mean-rate stability under strict feasibility follows directly from
Theorem~\ref{thm:safety_queue} via the standard Foster-Lyapunov
argument for reflected queues.
Problem~\eqref{eq:opt} is
non-separable per slot. Lemma~\ref{lem:dpp_bound} then provides the DPP
template that Section~\ref{sec:algorithm} instantiates as the SafeScale-MATD3 policy.
The full virtual queue set for DPP:
\begin{equation}
\begin{aligned}
Q_P(t+1) &= \max\{Q_P(t)+P_\mathrm{tot}(t)-P_{\max},0\},\\
Q_H(t+1) &= \max\!\Bigl\{Q_H(t)+\sum_v H_v^\mathrm{disc}(t)
-N_\ho^\mathrm{disc},0\Bigr\},\\
Z_{v,m}(\tau+1) &= \max\{Z_{v,m}(\tau)
+\indic[\AoI_{v,m}^{(\tau)}>n_{v,m}^\safe]-\epsilon_m,0\}.
\end{aligned}
\label{eq:vq_updates}
\end{equation}

With the DPP template and virtual queues in place, two analytical
quantities remain open: the drift constant~$B$ required by
Lemma~\ref{lem:dpp_bound}, and the worst-case AoI cost incurred by
ping-pong handover sequences. The next section derives closed-form
bounds for both.

\section{Handover AoI Spike Analysis Under Dual Dynamics}
\label{sec:analysis}

This section derives the closed-form handover spike bounds that are
required by the DPP template of Section~\ref{sec:problem} and later
exploited by the scheduler in Section~\ref{sec:algorithm}. The main
result is Theorem~\ref{thm:aoi_spike}: the cumulative AoI penalty of a
ping-pong sequence grows \emph{quadratically} with oscillation length,
making oscillation suppression the highest-leverage lever for safety
compliance and directly answering Q2. We proceed in two steps:
Section~\ref{sec:phase_decomp} decomposes long-run AoI into connected
and handover phases, and Section~\ref{sec:spike_pingpong} quantifies how
ping-pong oscillations amplify the handover component. These bounds
supply the drift constant~$B$ needed by Lemma~\ref{lem:dpp_bound} and
analytically justify the handover budget constraint~\eqref{eq:opt_ho}.


Two quantities in Problem~\eqref{eq:opt} remain open: the drift
constant~$B$ in Lemma~\ref{lem:dpp_bound} and the analytical
justification for the handover budget~\eqref{eq:opt_ho}. This
section derives closed-form spike bounds that determine both.
The DPP construction requires uniform per-slot AoI increment bounds
that simulation estimates cannot supply; the tick model from
Section~\ref{sec:handover_model} enables exact integer-arithmetic
analysis.

\subsection{Phase-Segmented Decomposition}
\label{sec:phase_decomp}

Following Chiariotti~et~al.~\cite{chiariotti2020aoi}, we decompose the long-term average
AoI as
\begin{equation}
\bar{A}_{v,m}^{\mathrm{total}} =
\frac{T_\mathrm{conn}}{T_\mathrm{total}}\bar{A}_{v,m}^{\mathrm{(conn)}}
+ \frac{T_\ho}{T_\mathrm{total}}\bar{A}_{v,m}^{\mathrm{(ho)}},
\label{eq:decomp}
\end{equation}
where $\bar{A}_{v,m}^{\mathrm{(conn)}}$ denotes the connected-phase
long-run average AoI under the hypoexponential approximation
of~\cite{chiariotti2020aoi}.


\subsection{Closed-Form AoI Spike}
\label{sec:spike_pingpong}

The decomposition~\eqref{eq:decomp} isolates
$\bar{A}_{v,m}^{(\ho)}$ as the dominant term for safety-critical
traffic. Theorem~\ref{thm:aoi_spike} quantifies how ping-pong
oscillation amplifies this cost.

\begin{definition}[Ping-Pong Handover Sequence]
\label{def:pingpong}
Vehicle $v$ experiences a ping-pong sequence of length $k$
at slots $\{t_1,\ldots,t_k\}$ if $H_v(t_i)=1$ $\forall i$
and $k_v(t_i)=k_v(t_{i-2})$ $\forall i\geq2$
(oscillation between two satellites under greedy
MRSS~\cite{park2025joint}).
\end{definition}

\begin{theorem}[Worst-case handover-slot AoI spike envelope]
\label{thm:aoi_spike}
Consider vehicle $v$ with slot-start AoI
$a_0=\AoI_{v,m}^{t_0}$ before a ping-pong sequence of $k$
consecutive handover slots $t_0{+}1,\ldots,t_0{+}k$.
Assume the \emph{conservative} case that no successful update
occurs in any tick of these slots (equivalently, the
``otherwise'' branch of~\eqref{eq:aoi_ticks} applies every tick,
matching the slot summary~\eqref{eq:aoi_slot} with increment
$N_\mathrm{ac}$ per handover slot). Then the end-of-slot AoI is
deterministic:
\begin{align}
\AoI_{v,m}^{t_0+k} &= a_0 + k\,N_\mathrm{ac},
\label{eq:mean_spike_exact}\\
\mathrm{Var}\!\left[\AoI_{v,m}^{t_0+k}\right] &= 0.
\label{eq:var_spike_exact}
\end{align}
The expected cumulative slot-end AoI (same hypothesis) is
\begin{equation}
\E\!\left[\sum_{i=1}^k \AoI_{v,m}^{t_0+i}\right]
= k\,a_0 + \frac{k(k+1)}{2}\,N_\mathrm{ac},
\label{eq:cumulative}
\end{equation}
which scales \emph{quadratically} in $k$.
\end{theorem}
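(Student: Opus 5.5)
The plan is to argue by induction on the number of handover slots, working directly from the tick-level law~\eqref{eq:aoi_ticks} and checking consistency with the slot summary~\eqref{eq:aoi_slot}.

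\textbf{Single slot.} Fix a handover slot $t_0{+}i$ with $1\le i\le k$. Under the conservative hypothesis, the ``otherwise'' branch of~\eqref{eq:aoi_ticks} fires at every tick $n=1,\ldots,N_\mathrm{ac}$. This covers two kinds of ticks:
\begin{itemize}
\item the $n_\ho^{(t)}\ge 1$ outage ticks, where $\delta_v^{(t,n)}=0$ deterministically;
\item the remaining ticks, where success is excluded by assumption.
\end{itemize}
Telescoping the unit increments across the $N_\mathrm{ac}$ ticks gives $\AoI_{v,m}^{t_0+i}=\AoI_{v,m}^{t_0+i-1}+N_\mathrm{ac}$. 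This matches the second branch of~\eqref{eq:aoi_slot}, which applies since $H_v(t_0{+}i)=1$.

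\textbf{Induction and deterministic spike.} Starting from $\AoI_{v,m}^{t_0}=a_0$, induction on $i$ yields $\AoI_{v,m}^{t_0+i}=a_0+iN_\mathrm{ac}$ for every $i\le k$; taking $i=k$ gives~\eqref{eq:mean_spike_exact}. No step of this recursion depends on the fading realizations $\rho_{v,k}(t,n)$ or on the random outage lengths $\tau_\ho^{(t)}$. Conditional on the hypothesis (and on $a_0$), $\AoI_{v,m}^{t_0+k}$ is therefore a constant, and its variance vanishes, which is~\eqref{eq:var_spike_exact}.

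\textbf{Cumulative cost.} Summing the closed form gives
\[
\sum_{i=1}^k \AoI_{v,m}^{t_0+i}=\sum_{i=1}^k\bigl(a_0+iN_\mathrm{ac}\bigr)=ka_0+\frac{k(k+1)}{2}N_\mathrm{ac}.
\]
Because the sum is deterministic, its expectation equals itself, which is~\eqref{eq:cumulative}. The $k(k+1)/2$ factor is the source of the quadratic growth in $k$, in contrast with the linear cost $kN_\mathrm{ac}$ of $k$ isolated outages each followed by a reset.

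\textbf{Main obstacle.} The arithmetic is routine; the delicate point is interpretive. The hypothesis must be read as conditioning on the no-success event, not as an almost-sure property of the channel. Otherwise the variance claim would contradict the stochastic success probabilities in~\eqref{eq:psucc}. I would state explicitly that~\eqref{eq:mean_spike_exact}--\eqref{eq:cumulative} hold conditionally on this event, with $a_0$ treated as given. I would also note that the result is a worst-case envelope: on any realization, $\AoI$ is at most this value, since any success only resets AoI downward.
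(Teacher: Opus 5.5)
Your proposal is correct and follows essentially the paper's argument: the paper reads the per-slot increment $N_\mathrm{ac}$ directly off the second branch of \eqref{eq:aoi_slot}, whereas you derive it by telescoping the tick law \eqref{eq:aoi_ticks}. Either way you then unroll the recursion, note that the trajectory is deterministic so the variance vanishes, and sum $a_0+iN_\mathrm{ac}$ to obtain \eqref{eq:cumulative}. Your explicit caveat, that the claims hold conditionally on the no-success event with $a_0$ given, is a sensible clarification the paper leaves implicit.
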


\begin{proof}
Under the hypothesis, each handover slot adds exactly
$N_\mathrm{ac}$ ticks to AoI by~\eqref{eq:aoi_slot}, so
$\AoI_{v,m}^{t_0+i}=\AoI_{v,m}^{t_0+i-1}+N_\mathrm{ac}$ for
$i=1,\ldots,k$. Unrolling gives~\eqref{eq:mean_spike_exact}, and the
variance is zero because the trajectory is deterministic.
Summing $\AoI_{v,m}^{t_0+i}=a_0+iN_\mathrm{ac}$ over
$i=1,\ldots,k$ yields~\eqref{eq:cumulative}.
The simplicity is deliberate: Lemma~\ref{lem:dpp_bound} needs a
uniform worst-case per-slot AoI increment for the drift constant~$B$.
This conservative envelope supplies that bound, while
Proposition~\ref{prop:refined_ho_increment} gives a tighter
stochastic characterization when reconnection statistics are available.
\end{proof}

The $\mathcal{O}(k^2)$ scaling in~\eqref{eq:cumulative} has a crisp
operational interpretation: each handover slot not only incurs its own
outage cost but also elevates the baseline from which the next slot's
cost is measured. This compounding means that a burst of $k{=}3$
consecutive handover slots is $6{\times}$ costlier in cumulative AoI
than three isolated single-slot outages, a ratio that grows linearly
with~$k$.
Consequently, suppressing consecutive handover bursts via
constraint~\eqref{eq:opt_ho} yields disproportionate safety returns
compared to reducing per-outage duration, directly answering Q2.
The DPP objective implication follows directly from
\eqref{eq:cumulative} after weighting by $\{w_m\}$.

\begin{remark}[Floor approximation error quantification]
\label{rem:floor_error}
The outage tick $n_\ho^{(i)}=\lfloor\tau_\ho^{(i)}/\td\rfloor$
incurs error $e_i=\tau_\ho^{(i)}/\td - n_\ho^{(i)}\in[0,1)$,
i.e., at most one tick. The maximum relative error per outage is
$1/n_{v,1}^\safe$, bounded by~\eqref{eq:tick_rule}, and the aggregate
relative error decreases over longer outage spans.
\end{remark}

\begin{figure*}[t]
    \centering
    \includegraphics[width=0.85\linewidth]{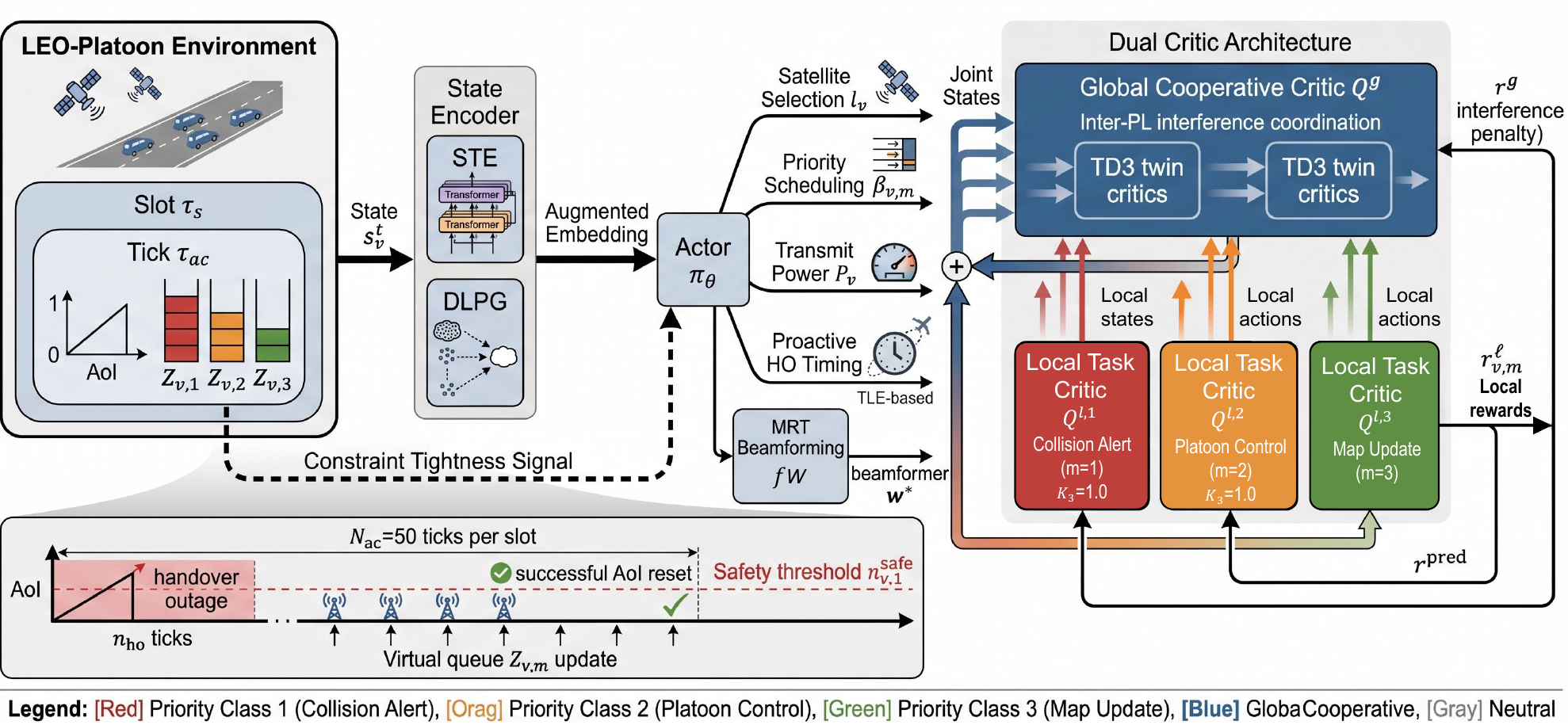}
    \caption{SafeScale-MATD3 method overview under dual dynamics. 
    }
    \label{fig:method_overview}
    \end{figure*}

\begin{corollary}[Safety design criterion]
\label{cor:design}
Assume $n_\ho^{(t)} \ge n_{v,1}^\safe$, i.e., the mandatory outage
length in ticks is at least as large as the collision-alert safety
threshold. Under the adopted parameters
($\td=20$\,ms, $\tau_{\ho,\min}\approx 225$\,ms, so
$n_\ho=\lfloor 225/20\rfloor=11$; and $n_{v,1}^\safe=5$),
this condition holds with margin. Then during the mandatory outage
phase of a single handover, AoI rises by $n_\ho \ge n_{v,1}^\safe$
ticks before any reconnection attempt, necessarily violating the
collision-alert safety threshold. Thus purely reactive scheduling
cannot keep $m{=}1$ fresh \emph{during} that outage window,
regardless of inter-handover behavior.
Under the stronger conservative slot model of
Theorem~\ref{thm:aoi_spike}, one slot adds
$N_\mathrm{ac}=50\gg n_{v,1}^\safe=5$ ticks. Proactive timing
(Section~\ref{sec:proactive}) and discretionary handover control
\eqref{eq:opt_ho} remain necessary to limit exposure to these windows
and to consecutive handover sequences.
\end{corollary}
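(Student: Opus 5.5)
The plan is a direct tick-counting argument on the tick-level law \eqref{eq:aoi_ticks}, starting from the handover-slot convention in Section~\ref{sec:aoi_evo}. First, I check the numerical premise. With $\td=20$\,ms and $\tau_{\ho,\min}\approx225$\,ms, the tick-rule \eqref{eq:tick_rule} of Proposition~\ref{prop:tick_selection} gives $n_\ho=\lfloor 225/20\rfloor=11$. Because $\tau_\ho^{(t)}\ge\tau_{\ho,\min}$ almost surely, every realized outage satisfies $n_\ho^{(t)}\ge 11$. Comparing this with $n_{v,1}^\safe=5$ from \eqref{eq:thresholds} establishes the hypothesis $n_\ho^{(t)}\ge n_{v,1}^\safe$ with margin.

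Next comes the core step. Fix a handover slot $t$ with $H_v(t)=1$. By the model, $\delta_v^{(t,n)}=0$ for $n=1,\ldots,n_\ho^{(t)}$, so no successful update can occur in those ticks. The ``otherwise'' branch of \eqref{eq:aoi_ticks} therefore applies deterministically, and induction on $n$ gives $\AoI_{v,1}^{(t,n+1)}=\AoI_{v,1}^{(t,1)}+n$ for $n\le n_\ho^{(t)}$. Since AoI is always at least $1$ (the reset value), the AoI at the last outage tick is at least $1+n_\ho^{(t)}\ge 1+n_{v,1}^\safe>n_{v,1}^\safe$. The tick-level safety check then records at least one violation, in fact at least $n_\ho^{(t)}-n_{v,1}^\safe+1$ of them.

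The argument uses nothing about the policy's actions during the outage, nor about the AoI entering the slot beyond $\AoI\ge1$. This gives the ``regardless of inter-handover behavior'' clause: even a scheduler that achieves AoI $=1$ at the tick before the handover cannot avoid the violation. Reactive scheduling only acts through $\delta_v^{(t,n)}$, and those indicators are forced to zero here.

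For the slot-model statement, I invoke Theorem~\ref{thm:aoi_spike} with $k=1$, giving an increment of $N_\mathrm{ac}=\ts/\td=50$ ticks, which is $\gg5$. For consecutive handovers I use \eqref{eq:mean_spike_exact}. The closing claim that proactive timing and the budget \eqref{eq:opt_ho} are necessary follows because violations can only be reduced by reducing the number of handover slots, or by timing them well, rather than by in-slot scheduling.

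The main obstacle is interpretive rather than technical. The indexing in \eqref{eq:aoi_ticks} (whether the AoI check happens at tick $n$ or $n+1$) could shift the count by one tick. I handle this with the $\ge 1+n_\ho$ lower bound, which leaves a margin of six ticks, so the off-by-one is harmless.
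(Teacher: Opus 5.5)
Your proposal is correct and is the same tick-counting argument the corollary itself sketches. The $n_\ho^{(t)}$ ticks with $\delta_v^{(t,n)}=0$ drive the collision-alert AoI from at least $1$ to at least $1+n_\ho^{(t)}>n_{v,1}^\safe$, and Theorem~\ref{thm:aoi_spike} with $k=1$ supplies the slot-level remark. One small precision: $\AoI_{v,1}^{(t,\,n_\ho^{(t)}+1)}\ge 1+n_\ho^{(t)}$ is the AoI at tick $n_\ho^{(t)}+1$, not at the last outage tick. That tick is also checked, because a reconnection success there only resets AoI at tick $n_\ho^{(t)}+2$, so the bare hypothesis $n_\ho^{(t)}\ge n_{v,1}^\safe$ already suffices without the six-tick margin.
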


Corollary~\ref{cor:design} shows that purely reactive scheduling
cannot maintain $m{=}1$ feasibility during outage windows,
necessitating both proactive timing and discretionary budgeting.
The conservative envelope of Theorem~\ref{thm:aoi_spike} can be
tightened when reconnection statistics are available.
When post-outage reconnection succeeds with nonzero probability, the
conservative Theorem~\ref{thm:aoi_spike} envelope can be tightened.
Theorem~\ref{thm:aoi_spike} assumed the worst case (no successful
update in any tick of the handover slot). We now relax this by
conditioning on independent post-outage reconnection outcomes, using
$\Delta\AoI \triangleq \AoI_{v,m}^{t+1}-\AoI_{v,m}^{t}$ as defined
in~\eqref{eq:expected_increment}.

\begin{proposition}[Upper bound on expected increment over one handover slot]
\label{prop:refined_ho_increment}
Let $p_s(n)$ denote the per-tick reconnection success probability for
ticks $n>n_\ho^{(t)}$ within slot $t$, given $H_v(t)=1$.
Under Assumption~\ref{asm:independence},
\begin{equation}
\E[\Delta\AoI\mid H_v(t)=1]
= n_{\mathrm{ho}}^{(t)}
+ (N_\mathrm{ac}-n_{\mathrm{ho}}^{(t)})
\prod_{n=n_{\mathrm{ho}}^{(t)}+1}^{N_\mathrm{ac}}
\bigl(1-p_s(n)\bigr).
\label{eq:refined_delta_aoi}
\end{equation}
For constant $p_s(n)\equiv p_s$:
\begin{equation}
\E[\Delta\AoI\mid H_v(t)=1]
= n_{\mathrm{ho}}^{(t)}
+ (N_\mathrm{ac}-n_{\mathrm{ho}}^{(t)})
(1-p_s)^{\,N_\mathrm{ac}-n_{\mathrm{ho}}^{(t)}}.
\label{eq:refined_delta_aoi_const}
\end{equation}
\end{proposition}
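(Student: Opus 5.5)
The plan is to condition on the reconnection outcome in the post-outage phase and reduce the expectation to a two-point law, in the same spirit as the derivation of \eqref{eq:expected_increment} for connected slots. Given $H_v(t)=1$, the first $n_\ho^{(t)}$ ticks are deterministic outage ticks: $\delta_v^{(t,n)}=0$, and the ``otherwise'' branch of \eqref{eq:aoi_ticks} applies. These ticks contribute exactly $n_\ho^{(t)}$ unit increments regardless of the channel, and this gives the leading term of \eqref{eq:refined_delta_aoi}. The remaining $N_\mathrm{ac}-n_\ho^{(t)}$ ticks are reconnection attempts. Tick $n$ succeeds independently with probability $p_s(n)$, by Assumption~\ref{asm:independence}.

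Next, define the event $\mathcal{S}$ that at least one post-outage tick succeeds. By independence across ticks,
\[
\Pr(\mathcal{S}^c)=\prod_{n=n_\ho^{(t)}+1}^{N_\mathrm{ac}}\bigl(1-p_s(n)\bigr),
\]
which mirrors the product form of $P_{\mathrm{succ},v}(t)$ in \eqref{eq:psucc}. The two branches are handled as follows.
\begin{itemize}
\item On $\mathcal{S}^c$, every tick of the slot takes the ``otherwise'' branch. The increment is then exactly $N_\mathrm{ac}$, as in the conservative envelope of Theorem~\ref{thm:aoi_spike}.
\item On $\mathcal{S}$, the AoI is reset at the first successful tick. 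The charge attributed to the handover is then at most the $n_\ho^{(t)}$ outage ticks.
\end{itemize}
Combining the branches by the law of total expectation gives $n_\ho^{(t)}\bigl(1-\Pr(\mathcal{S}^c)\bigr)+N_\mathrm{ac}\Pr(\mathcal{S}^c)$, which rearranges to \eqref{eq:refined_delta_aoi}. Setting $p_s(n)\equiv p_s$ collapses the product to $(1-p_s)^{N_\mathrm{ac}-n_\ho^{(t)}}$, which yields \eqref{eq:refined_delta_aoi_const}.

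The main obstacle is justifying that the increment on $\mathcal{S}$ is bounded by $n_\ho^{(t)}$. Taken literally, the tick law \eqref{eq:aoi_ticks} makes the slot-end AoI after a reset at tick $n^\ast$ equal to $N_\mathrm{ac}-n^\ast+1$. This is not $a_0+n_\ho^{(t)}$, so the raw difference $\AoI^{t+1}-\AoI^t$ depends on $a_0$ and $n^\ast$. I would handle this with the slot-summary convention of Assumption~\ref{asm:slot_summary}: a successful slot resets the AoI, and only the deterministic outage excess is charged. This is the same accounting that underlies \eqref{eq:expected_increment} and the ``upper bound'' wording of the proposition. Concretely, I would show that the handover-attributable excess, meaning the AoI accrued strictly before the first possible reconnection, is exactly $n_\ho^{(t)}$. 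I would also show that the post-reset accrual is no larger than what a connected slot would incur, so it is absorbed into the connected-phase term of \eqref{eq:decomp}. Under that convention the $\mathcal{S}$-branch value is an upper bound, and the displayed expression bounds the expected increment from above. If the convention is read as an exact identity, equality holds.

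As a sanity check, the limit $p_s\to0$ recovers $N_\mathrm{ac}$, which matches Theorem~\ref{thm:aoi_spike}. The limit $p_s\to1$ gives $n_\ho^{(t)}$, which matches the mandatory-outage count in Corollary~\ref{cor:design}.
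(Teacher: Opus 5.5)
Your proposal is essentially the paper's proof. The $n_\ho^{(t)}$ outage ticks are charged deterministically, the all-fail post-outage event gets probability $\prod_{n>n_\ho^{(t)}}\bigl(1-p_s(n)\bigr)$ from Assumption~\ref{asm:independence}, and the full post-outage increment $N_\mathrm{ac}-n_\ho^{(t)}$ is charged to that event and nothing to the success event, which is exactly how the paper obtains \eqref{eq:refined_delta_aoi}. The obstacle you flag is real, and the paper passes over it with the same convention; to make the ``upper bound'' reading honest, extend the best-case reset of Assumption~\ref{asm:slot_summary} to the post-outage ticks (as written, \eqref{eq:aoi_slot} puts $H_v(t)=1$ in the $+N_\mathrm{ac}$ branch), so that on $\mathcal{S}$ you get $\Delta\AoI=1-a_0\le 0\le n_\ho^{(t)}$ directly, whereas under the literal tick law \eqref{eq:aoi_ticks} the raw expected increment can exceed the displayed value (e.g., $N_\mathrm{ac}=50$, $n_\ho^{(t)}=11$, $a_0=1$, small $p_s$).
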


\begin{proof}
The outage contributes exactly $n_{\mathrm{ho}}^{(t)}$ increments.
The all-fail post-outage event has probability
$\prod_{n>n_\ho^{(t)}}(1-p_s(n))$ under Assumption~\ref{asm:independence}.
Assigning full post-outage increment to that event yields the stated
upper bound, with the constant-$p_s$ form obtained directly.
\end{proof}

The bound assigns the full post-outage increment
$N_\mathrm{ac}-n_\ho^{(t)}$ to the all-fail event and zero to success
events (which reset AoI to~1). The bound is tight when $p_s\to 0$
and loose when reconnection is reliable; it converges to
Theorem~\ref{thm:aoi_spike}'s conservative envelope as
$p_s\to 0$.
Theorem~\ref{thm:aoi_spike} and Corollary~\ref{cor:design}
analytically ground constraints~\eqref{eq:opt_ho}
in Problem~\eqref{eq:opt}. Section~\ref{sec:algorithm}
instantiates both mechanisms in SafeScale-MATD3.
\section{Online Multi-Timescale Scheduling: SafeScale-MATD3}
\label{sec:algorithm}

This section instantiates the DPP control skeleton of
Section~\ref{sec:problem} as a fully specified multi-agent algorithm,
SafeScale-MATD3. Lemma~\ref{lem:dpp_bound} reduces Problem~\eqref{eq:opt}
to per-slot minimization of the DPP right-hand side; the key design
insight is that each of the three virtual queues maps to a dedicated
policy module, making safety enforcement structurally embedded rather
than reward-engineered. We organize the section in functional blocks.
Section~\ref{sec:arch} presents the four gap-linked modules and their
analytical necessity. Sections~\ref{sec:state} to \ref{sec:reward}
specify MDP components (state, action, reward). Sections~\ref{sec:critics}
and~\ref{sec:encoding} detail the learning architecture (dual critics
and diffusion augmentation). Section~\ref{sec:proactive} presents
TLE-based proactive handover timing, and Section~\ref{sec:full_alg}
integrates all components and states the complexity bound.

\subsection{Architecture: Gap-Linked Modules}
\label{sec:arch}

SafeScale-MATD3 comprises four modules whose joint necessity is supported
by Sections~\ref{sec:model} and \ref{sec:analysis}: STE and DLPG for
fast channel dynamics, TLE-based proactive handover timing, tick-level
inner-loop AoI and safety updates, and global-local critics aligned with
priority-specific queues.
For readability, the gap-to-module mapping is:
\emph{Gap~1} to STE and DLPG,
\emph{Gap~2} to tick-level AoI and queue updates
(\eqref{eq:aoi_ticks}, \eqref{eq:vq_updates}), and
\emph{Gap~3} to proactive timing plus queue-aware critics
(\eqref{eq:proactive}, \eqref{eq:pg}).

Single-stream MRT~\eqref{eq:mrt} closes the beamformer given $P_v^t$ and
instantaneous CSI, so the actor optimizes power without a nested
non-convex beam search. Inter-PL interference is shaped through the
global critic~\eqref{eq:rg}.
    Fig.~\ref{fig:method_overview} visualizes how the proposed method closes
    the loop between dual-timescale dynamics and safety-aware learning.
    On the left, the environment block nests slot-level decision epochs
    and tick-level handover transients, explicitly showing the AoI sawtooth
    and per-priority queue evolution during outage and recovery periods.
    The middle pipeline (STE + DLPG + actor) transforms raw state into a
    richer embedding so that a single policy can jointly select serving
    satellite, schedule priority weights, set transmit power, and trigger
    proactive handover timing. On the right, gradient signals from the
    global critic and the three local critics are aggregated to update the
    actor, balancing cooperative interference mitigation against
    priority-specific safety objectives. Most importantly, the dashed
    feedback path from $Z_{v,m}$ to $s_v^t$ operationalizes
    constraint-tightness awareness: once queue backlogs increase, the policy
    is immediately steered toward safer actions before hard violations
    accumulate.


\subsection{State Space Design}
\label{sec:state}

Each component of the state vector maps to a specific term on
the RHS of~\eqref{eq:dpp_bound}:
\begin{equation}
    \small
s_v^t = \bigl[
c_t^S, c_t^U,
h_{v,k}^t, I_v^{t-1}, \zeta_v^r, T_v^r,
\underbrace{(A_{v,m}^t)_{m=1}^3,
(Z_{v,m}^t)_{m=1}^3}_{\text{AoI+safety}},
\underbrace{W_v^t, \hat{h}_{v,k'}^{t+1}}_{\text{prediction}}
\bigr].
\label{eq:state}
\end{equation}
where $W_v^t$ is the TLE-predicted remaining visibility window
and $\hat{h}_{v,k'}^{t+1}$ is the predicted channel quality of
the best candidate satellite. Including $Z_{v,m}(t)$ directly
in the state, rather than only in reward penalties, exposes
proximity to constraint instability and enables anticipatory
decisions before violations accumulate, connecting
Theorem~\ref{thm:safety_queue} to online policy behavior.

\subsection{Action Space and Beamforming}\label{sec:action}

\textbf{Discrete DRL actions:} satellite selection $l_v^t$,
priority scheduling $\beta_{v,m}^t\in\{0,1\}$
(from~\cite{parvini2021aoi}, extended to $m\in\{1,2,3\}$),
handover timing $H_v^\mathrm{opt}(t)$ (new).

\textbf{Continuous DRL action:} transmit power
$P_v^t\in[0,P_{\max}]$.

Given fixed discrete decisions $(l_v^t,\beta_{v,m}^t,P_v^t)$ from the
policy, each PL~$v$ serves one stream to satellite $k_v(t)$.
The rate-maximizing beamformer is maximum ratio transmission:
\begin{equation}
\mathbf{w}_v^\star = \sqrt{P_v^t}\,
\frac{h_{v,k}^t}{\|h_{v,k}^t\|},
\label{eq:mrt}
\end{equation}
which is closed form and requires no inner iteration. The resulting
per-vehicle rate is
$R_v^t=\log_2(1+P_v^t\|h_{v,k}^t\|^2/\sigma^2)$.
Each PL transmits one stream per tick; inter-PL interference is managed
through the global cooperative reward~\eqref{eq:rg} rather than
coordinated beamforming.

\subsection{Reward Design}
\label{sec:reward}

The reward structure decomposes into a task-generic local template and a
global interference term. For priority class $m\in\{1,2\}$:
\begin{equation}
\begin{aligned}
r_{v,m}^\ell = &-\kappa_1 A_{v,m}^t
+ \kappa_2 G(R_{v,k}^t-R_{\min,m}) \\
&- \kappa_{3,m}\max(0,A_{v,m}^t-n_{v,m}^\safe)^2
- \kappa_4\mathcal{F}\{P_v^t\},
\end{aligned}
\label{eq:r_local_template}
\end{equation}
where $G(x)=\max(0,x)$ and
$\mathcal{F}\{P_v^t\}=P_v^t/P_{\max}$.
We use $(\kappa_{3,1},\kappa_{3,2})=(2.0,1.0)$,
$\kappa_1=1.0$, $\kappa_2=0.5$, $\kappa_4=0.1$.

The predictive timing bonus is
\begin{equation}
r_v^\mathrm{pred} =
\kappa_5\,
\indic[A_{v,1}^t\text{ sent before HO}]\,
\indic[W_v^t < T_\mathrm{pre}],
\label{eq:r_pred}
\end{equation}
with $\kappa_5=0.3$ and $T_\mathrm{pre}=3$\,s.

\textbf{Global reward} (interference,
from~\cite{parvini2021aoi} Eq.~(16)):
\begin{equation}
r_t^g = -\frac{1}{N_v}\sum_{v'\in\calV}\sum_k
\log_{10}\{I_{v'}^t[k]\}.
\label{eq:rg}
\end{equation}

Total local reward: $r_v^\ell=r_{v,1}^\ell+r_{v,2}^\ell
+r_v^\mathrm{pred}$.

\subsection{Dual-Critic Multi-Task Policy Gradient}\label{sec:critics}

The policy gradient extends the dual-critic update to three task critics~\cite{parvini2021aoi}:
\begin{align}
\nabla_{\theta_v}J_v &=
\underbrace{\E_{s,a\sim\calD}\!\bigl[\nabla_{\theta_v}\pi_v\nabla_{a_v}Q_{\psi_1}^g(s,a)\bigr]}_{\text{global cooperative critic}}
\notag\\
&\quad+\underbrace{\sum_{m=1}^{3}\E_{s_v,a_v\sim\calD}\!\bigl[\nabla_{\theta_v}\pi_v\nabla_{a_v}Q_{\phi_{v,m}}^{\ell,m}(s_v,a_v)\bigr]}_{\text{task-decomposed local critics}}.
\label{eq:pg}
\end{align}

Local critics use per-task targets
$y_{v,m}^{\ell}=r_{v,m}^\ell+\gamma
Q_{\phi_{v,m}'}^{\ell,m}(s_v',a_v')$;
global twin critics use TD3 clipped targets
$y^g=r_t^g+\gamma\min_{i=1,2}Q_{\psi_i'}^g(s',a')$.

\subsection{State Encoding: STE and DLPG}\label{sec:encoding}
\textbf{State Transformer Encoder (STE):} tokenized states are processed by a two-layer transformer with multi-head self-attention:
\begin{align}
\tilde{X}_v^{(l)} &= X_v^{(l-1)} + \mathrm{MHA}(X_v^{(l-1)}),
\label{eq:mha}\\
X_v^{(l)} &= \tilde{X}_v^{(l)} + \mathrm{FFN}(\tilde{X}_v^{(l)}),
\label{eq:ffn}
\end{align}

\textbf{DLPG augmentation:} conditioned reverse diffusion produces an augmented latent:
\begin{equation}
    \small
z_{m-1}=\frac{1}{\sqrt{1-\beta_m}}\!\left(z_m-\frac{\beta_m}{\sqrt{1-\bar{\alpha}_m}}\epsilon_\theta(z_m,m\,|\,h_v^t,a_v^t)\right)+\sigma_m w,
\label{eq:dlpg}
\end{equation}
where $\bar{\alpha}_m=\prod_{i=1}^m(1-\beta_i)$, $\sigma_m=\sqrt{\beta_m}$, and $w\sim\mathcal{N}(0,I)$. The augmented latent $\tilde{h}_v^t=\mathrm{Concat}(h_v^t,\hat{z}_0)$ is fed to actor/critics for robustness under fast fading.

\subsection{TLE-Based Proactive Handover Module}\label{sec:proactive}

In this subsection $H\in\{0,1,\ldots,F\}$ is the integer
\emph{scheduling horizon} (slots to wait before executing a handover),
distinct from the binary handover indicator $H_v(t)\in\{0,1\}$ used in
Section~\ref{sec:model} and the orbital altitude $H_L=550$\,km.
Context disambiguates these throughout, but readers should note the
intentional overloading.
Corollary~\ref{cor:design} establishes proactive timing as
necessary for $m{=}1$ feasibility. The module operationalizes
this via a look-ahead objective:
\begin{equation}
    \small
H_v^\mathrm{opt}(t) = \arg\min_{k'\in\calK(t),H\in[0,F]}
\!\left[\hat{A}_{v,1}(H,k')
+\kappa_\mathrm{safe}\hat{p}_{v,1}^\mathrm{viol}(H,k')\right],
\label{eq:proactive}
\end{equation}
\begin{equation}
    \small
\hat{p}_{v,1}^\mathrm{viol}(H,k')
:= \sigma_\mathrm{sig}\!\left(
\frac{Z_{v,1}^t + \mu_n - H N_\mathrm{ac}\epsilon_1}
{Z_{\mathrm{scale}}}
\right),
\quad \sigma_\mathrm{sig}(x)=\frac{1}{1+e^{-x}},
\label{eq:pviol_sigmoid}
\end{equation}
where $\hat{A}_{v,1}(H,k')$ is the AoI predicted under handover to~$k'$
at horizon~$H$ using TLE-predicted throughput~$\hat{C}_{v,k'}$, and
$\hat{p}_{v,1}^\mathrm{viol}$ is the estimated violation probability
from the virtual queue state $Z_{v,1}(t)$.
Here $H$ denotes the number of slots to wait before executing the
handover (i.e., the scheduling horizon), not the number of handovers.
Exactly one handover outage occurs, contributing $\mu_n$ expected ticks
of violation; the queue drains at rate $N_\mathrm{ac}\epsilon_1$ per
slot over the $H$-slot waiting window. Hence the numerator is
$Z_{v,1}^t + \mu_n - H N_\mathrm{ac}\epsilon_1$.
The horizon search is bounded by visibility prediction to avoid
unplanned forced outages.

\begin{algorithm}[t]
\caption{TLE-Based Proactive Handover Timing}
\small
\label{alg:proactive_ho}
\begin{algorithmic}[1]
\REQUIRE $s_v^t$, $Z_{v,1}^t$, TLE/SGP4 data, horizon $F$, weight $\kappa_\safe$
\ENSURE $H_v^\mathrm{opt}(t)$, target $k_v'$
\STATE Update candidate set $\calK(t)$ and windows $\{W_v^k(t)\}_{k\in\calK(t)}$
\STATE $\mathrm{cost}_{\min}\gets\infty$, $k_v'\gets k_v(t{-}1)$, $H_v^\mathrm{opt}\gets 0$
\FOR{each $k'\in\calK(t)$}
    \FOR{$H=0$ to $\min(F,W_v^{k'}(t))$}
        \STATE Compute $\hat{p}_s\!=\!\Pr(\gamma\!\ge\!\gamma_\mathrm{th}\mid\hat{\gamma}_{v,k'})$ under Assumption~\ref{asm:fading} using TLE-predicted mean SNR~$\hat{\gamma}_{v,k'}$; apply $\hat{p}_s$ uniformly across post-outage ticks
        \STATE Predict $\hat{A}_{v,1}(H,k')$ by forward-simulating~\eqref{eq:aoi_ticks} with $p_s(n)=\hat{p}_s$
        \STATE Estimate $\hat{p}_{v,1}^{\mathrm{viol}}(H,k')$ from $(Z_{v,1}^t,H,n_{v,1}^{\safe})$
        \STATE $\mathrm{cost}\gets \hat{A}_{v,1}(H,k')+\kappa_\safe\hat{p}_{v,1}^{\mathrm{viol}}(H,k')$
        \IF{$\mathrm{cost}<\mathrm{cost}_{\min}$}
            \STATE $\mathrm{cost}_{\min}\gets\mathrm{cost}$, $k_v'\gets k'$, $H_v^\mathrm{opt}\gets H$
        \ENDIF
    \ENDFOR
\ENDFOR
\STATE Classify $H_v^\mathrm{opt}(t)$ as discretionary/forced via Definition~\ref{def:handover_types}
\RETURN $H_v^\mathrm{opt}(t)$, $k_v'$
\end{algorithmic}
\end{algorithm}


\subsection{Integrated Training Procedure}\label{sec:full_alg}

Algorithm~\ref{alg:main} integrates proactive handover search and
MARL policy optimization. Each slot alternates five phases:
state encoding, proactive handover search, actor-critic action
selection with MRT beamforming, tick-level evolution using
\eqref{eq:aoi_ticks} and \eqref{eq:vq_updates}, and mini-batch updates.

\begin{algorithm}[t]
\caption{SafeScale-MATD3: Integrated Training}
\small
\label{alg:main}
\begin{algorithmic}[1]
\REQUIRE Policy/critic parameters, queue budgets, TLE data
\STATE Initialize queues, actor-critic networks, replay buffer, and encoders
\FOR{episode $=1$ to $N_\mathrm{eps}$}
    \STATE Reset environment state
    \FOR{slot $t=1$ to $T$}
        \STATE Update visibility and predicted channels via SGP4/TLE
        \STATE \textit{// Phase 1: state encoding}
        \STATE Encode states with STE and DLPG
        \STATE \textit{// Phase 2: proactive handover search}
        \FOR{each $v\in\calV$}
            \STATE $(H_v^\mathrm{opt},k_v')\leftarrow$ Algorithm~\ref{alg:proactive_ho}$(s_v^t,Z_{v,1}^t)$
        \ENDFOR
        \STATE \textit{// Phase 3: actor-critic action and MRT}
        \FOR{each $v\in\calV$}
            \STATE Select $a_v^t=(l_v^t,\beta_{v,m}^t,H_v^\mathrm{opt},P_v^t)$ by $\epsilon$-greedy on $Q^g+\sum_m Q^{\ell,m}$
            \STATE Compute MRT beamformer via \eqref{eq:mrt} and execute action
        \ENDFOR
        \STATE \textit{// Phase 4: tick-level dynamics and safety}
        \FOR{tick $n=1$ to $N_\mathrm{ac}$}
            \STATE Update AoI and safety queues via \eqref{eq:aoi_ticks}, \eqref{eq:vq_updates}
        \ENDFOR
        \STATE Store transitions and observed rewards in $\calD$
        \STATE \textit{// Phase 5: critic-actor updates}
        \IF{$|\calD|>B$}
            \STATE Update global/local critics and actors via~\eqref{eq:pg}
            \STATE Soft-update target networks
        \ENDIF
    \ENDFOR
\ENDFOR
\end{algorithmic}
\end{algorithm}


Per-slot complexity is:
\begin{equation}
\mathcal{O}\!\bigl(N_v(|k_\mathrm{STE}|+M|k_\mathrm{DLPG}|
+3|k_\ell|+|k_g|)\bigr)
+ \mathcal{O}(N_v\cdot N_\mathrm{ac})
\label{eq:complexity}
\end{equation}
for the tick loop, where $|k_\cdot|$ denote network parameter
counts. MRT~\eqref{eq:mrt} adds $\mathcal{O}(N_v N_xN_y)$ FLOPs per
slot for channel-vector normalization, without an iterative inner
loop~\cite{parvini2021aoi}. Wall-clock training and per-slot inference
timings are hardware-dependent and omitted in this version.

\section{Simulation Results}
\label{sec:simulation}

This section evaluates SafeScale-MATD3 through three progressively
deepening investigations. \textbf{Part I} validates the three core
theoretical contributions: quadratic ping-pong AoI scaling, necessity of
tick-level resolution, and virtual-queue safety compliance.
\textbf{Part II} benchmarks the proposed method against representative
baselines in convergence, per-priority AoI, energy and AoI tradeoff, and
handover behavior. \textbf{Part III} examines robustness through module
ablation and environmental sensitivity sweeps. Three key findings emerge:
(i) SafeScale-MATD3 is the only method satisfying the strict
$\epsilon_1{=}0.01$ collision-alert budget, (ii) removing the virtual
queue module causes the largest degradation (about $4.6\times$ violation
increase), and (iii) safety compliance requires handover periods of at
least roughly 12\,s under the evaluated setting.

\subsection{Simulation Setup and Baselines}

Table~\ref{tab:params} lists all parameters with source citations. Each experiment is repeated over 5 independent random seeds; figures report mean $\pm$ 95\% confidence interval unless stated otherwise. The unified environment implements the tick-level AoI law~\eqref{eq:aoi_ticks} for \emph{all} methods to ensure fair comparison: every baseline experiences identical handover outages, Shadowed-Rician fading, and compound Doppler dynamics regardless of whether its own design accounts for them. The code is available at github \footnote{https://github.com/szpsunkk/SafeScale-MATD3}.

\begin{table}[t]
\centering
\caption{Simulation Parameters}
\label{tab:params}
\footnotesize
\resizebox{\columnwidth}{!}{
\begin{tabular}{llll}
\toprule
\textbf{Parameter} & \textbf{Value} & \textbf{Parameter} & \textbf{Value}\\
\midrule
Constellation model & Walker-Delta & Altitude $H_L$ (km) & 550\\
Visible satellites $|\calK(t)|$ & $\approx$10 & Min.\ elevation $\theta_{\min}$ ($^\circ$) & 25\\
Platoons $P$ & 5 & Vehicles/platoon $N_j$ & 6\\
Vehicle speed (km/h) & 36 to 54 & Intra-platoon gap (m) & 5, 15, 25, 35\\
Carrier frequency $f_c$ (GHz) & 1.67 & UPA size & $4\times4$\\
SR params $\{m,b,\Omega\}$ & $\{10, 0.126, 1.29\}$ & $P_{\max}$ (dBm) & 40\\
Bandwidth (MHz) & 10 & Subchannels $K$ & 3\\
HO delay $\mu_\tau$ (ms) & 225 & HO delay std $\sigma_\tau$ (ms) & 25\\
HO period (s) & $\approx$15 & Decision slot $\ts$ (s) & 1\\
AoI tick $\td$ (ms) & 20 & Ticks/slot $N_\mathrm{ac}$ & 50\\
$n^\safe_{v,1}$/$n^\safe_{v,2}$/$n^\safe_{v,3}$ & 5 / 10 / 50 & $\epsilon_1$/$\epsilon_2$/$\epsilon_3$ & 0.01 / 0.05 / 0.20\\
Priority weights $w_1/w_2/w_3$ & 5.0 / 2.0 / 0.5 & Episodes $N_\mathrm{eps}$ & 5000\\
\bottomrule
\end{tabular}}
\end{table}

Six baselines are selected to span the state-of-the-art across three dimensions: LEO-AoI optimization, vehicular multi-agent scheduling, and LEO handover management (Table~\ref{tab:rw_compare}).

\begin{enumerate}
\item \textbf{DD3QN-AS}~\cite{lang2025aoi}: Diffusion-augmented dueling double DQN with STE encoder for joint AoI and handover optimization in HAP-assisted SAGIN.

\item \textbf{Mod-MADDPG}~\cite{parvini2021aoi}: Global and local dual-critic TD3 with task decomposition for platoon C-V2X AoI scheduling.

\item \textbf{AMDT}~\cite{dai2025amdt}: Lyapunov DPP-based AoI-aware multi-user downlink scheduling for LEO with angular beamforming under Shadowed-Rician fading.

\item \textbf{ILCHO}~\cite{choi2026ilcho}: QMIX-based MARL for intelligent conditional handover in Walker-Delta LEO mega-constellations.

\item \textbf{MVT}~\cite{choi2026ilcho}: Maximum Visible Time handover, a non-learning baseline that stays with the current satellite until forced, then switches to the satellite with the longest remaining visible time.

\item \textbf{Round-Robin}~\cite{yates2021age}: Cyclic scheduling across platoons and priority classes without learning or AoI feedback. Serves as the performance lower bound.
\end{enumerate}

All baselines are adapted to the \emph{identical} two-timescale tick environment: AoI is tracked at $\td$-resolution, handover outages consume $n_\ho$ ticks, and safety violations are evaluated at every tick for all methods. This ensures that performance differences reflect algorithmic capability rather than environmental mismatch.

\subsection{Part I: Validation of Theoretical Results}

This subsection validates the three core theoretical contributions, namely the quadratic ping-pong AoI scaling (Theorem~\ref{thm:aoi_spike}), the necessity of tick-level resolution (Proposition~\ref{prop:tick_selection}), and the virtual-queue safety guarantee (Theorem~\ref{thm:safety_queue}), through controlled simulation experiments.

\begin{figure*}[t]
    \centering
    \begin{subfigure}[b]{0.3\linewidth}
        \includegraphics[width=\linewidth]{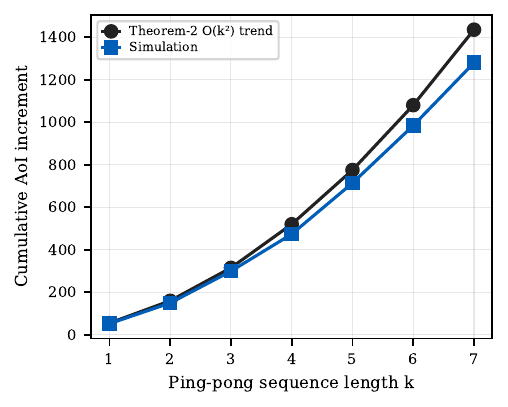}
        \caption{Quadratic AoI spike growth.}
        \label{fig:spike_val}
    \end{subfigure}
    \hfill
    \begin{subfigure}[b]{0.31\linewidth}
        \includegraphics[width=\linewidth]{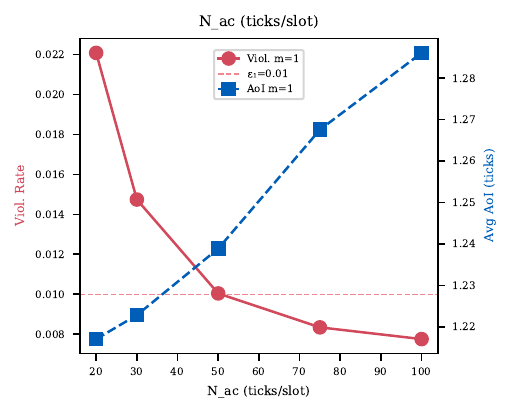}
        \caption{Tick resolution sensitivity.}
        \label{fig:tick_val}
    \end{subfigure}
    \hfill
    \begin{subfigure}[b]{0.35\linewidth}
        \includegraphics[width=\linewidth]{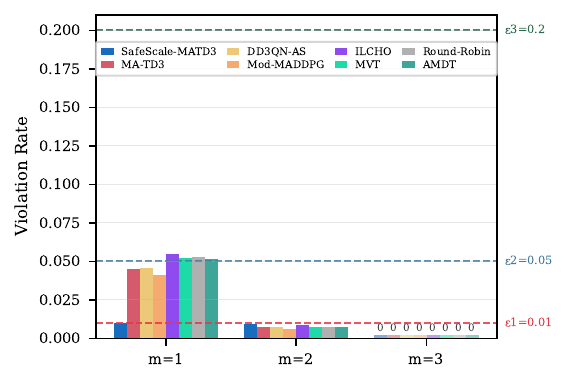}
        \caption{Safety violation compliance.}
        \label{fig:safety_val}
    \end{subfigure}
    \caption{Validation of the three theoretical contributions (mean $\pm$ 95\% CI across 3 seeds).
    }
    \label{fig:theory_validation}
\end{figure*}

\subsubsection{Quadratic Growth of Ping-Pong AoI Cost (Theorem~\ref{thm:aoi_spike})}
Fig.~\ref{fig:spike_val} validates Theorem~\ref{thm:aoi_spike} by measuring cumulative AoI cost under controlled ping-pong sequences of length $k \in \{1,\ldots,7\}$. We inject consecutive oscillating handovers between two satellites under the greedy MRSS criterion (Definition~\ref{def:pingpong}) and record the resulting cumulative slot-end AoI. The simulation markers closely follow the theoretical quadratic bound $C(k) = k\,a_0 + k(k{+}1)N_\mathrm{ac}/2$ from~\eqref{eq:cumulative}. The superlinear growth has a crisp operational implication: a burst of $k{=}3$ consecutive handover slots is $6{\times}$ costlier in cumulative AoI than three isolated single-slot outages. This ratio grows linearly with~$k$, making oscillation suppression via constraint~\eqref{eq:opt_ho} the highest-leverage mechanism for safety compliance, directly answering Q2.

\subsubsection{Necessity of Tick-Level Resolution (Proposition~\ref{prop:tick_selection} and Gap~2)}
Fig.~\ref{fig:tick_val} demonstrates the necessity of the two-timescale design by sweeping $N_\mathrm{ac} \in \{20, 30, 40, 50, 60, 70, 80, 90, 100\}$ while keeping the physical handover dynamics unchanged. As $N_\mathrm{ac}$ decreases (coarser ticks relative to outage duration), the $m{=}1$ violation rate degrades monotonically because fewer ticks are available to detect and react to intra-slot outage events. At $N_\mathrm{ac}{=}50$ (the adopted operating point), the violation rate remains near the $\epsilon_1{=}0.01$ budget, confirming that the tick-size selection criterion~\eqref{eq:tick_rule} provides adequate resolution. Conversely, increasing $N_\mathrm{ac}$ beyond 70 yields diminishing safety returns at the cost of higher computational overhead, suggesting the adopted $\td{=}20$\,ms balances resolution and efficiency.

\subsubsection{Virtual Queue Safety Compliance (Theorem~\ref{thm:safety_queue})}
Fig.~\ref{fig:safety_val} and Table~\ref{tab:safety} jointly validate the virtual-queue enforcement mechanism. SafeScale-MATD3 achieves $m{=}1$ violation rate of 0.0099, satisfying the $\epsilon_1{=}0.01$ budget. All baselines exceed this budget by factors of $4{\times}$ to $5.5{\times}$, with violation rates ranging from 0.0410 (Mod-MADDPG) to 0.0546 (ILCHO). For the less stringent budgets $\epsilon_2{=}0.05$ and $\epsilon_3{=}0.20$, all methods achieve compliance, consistent with the observation that lower-priority classes have substantially more safety margin. This result is consistent with the queue-stability interpretation of Theorem~\ref{thm:safety_queue}: the virtual queue $Z_{v,1}$ for collision alerts maintains strong stability under the proposed policy, whereas baselines lacking explicit queue-driven safety enforcement allow the queue backlog to grow, implying persistent violation overshoot.

\begin{table}[t]
\centering
\caption{Empirical Safety Violation Rates (mean over 3 seeds). \textbf{Bold}: satisfies $\epsilon_m$.}
\label{tab:safety}
\footnotesize
\resizebox{\columnwidth}{!}{
\begin{tabular}{lccc}
\toprule
\textbf{Method} & $m{=}1$ ($\epsilon_1{\le}0.01$) & $m{=}2$ ($\epsilon_2{\le}0.05$) & $m{=}3$ ($\epsilon_3{\le}0.20$)\\
\midrule
\textbf{SafeScale-MATD3} & \textbf{0.0099} & \textbf{0.0094} & \textbf{0.0000}\\
DD3QN-AS & 0.0454 & \textbf{0.0072} & \textbf{0.0000}\\
Mod-MADDPG & 0.0410 & \textbf{0.0063} & \textbf{0.0000}\\
AMDT & 0.0518 & \textbf{0.0073} & \textbf{0.0000}\\
ILCHO & 0.0546 & \textbf{0.0088} & \textbf{0.0000}\\
MVT & 0.0520 & \textbf{0.0074} & \textbf{0.0000}\\
Round-Robin & 0.0525 & \textbf{0.0074} & \textbf{0.0000}\\
\bottomrule
\end{tabular}}
\end{table}

\subsection{Part II: Performance Benchmarking against Baselines}

With the theoretical foundations validated, we now compare SafeScale-MATD3 against six representative baselines in terms of training convergence, per-priority AoI performance, and handover behavior.

\subsubsection{Training Convergence}

\begin{figure}[t]
    \centering
    \begin{subfigure}[b]{0.48\linewidth}
        \includegraphics[width=\linewidth]{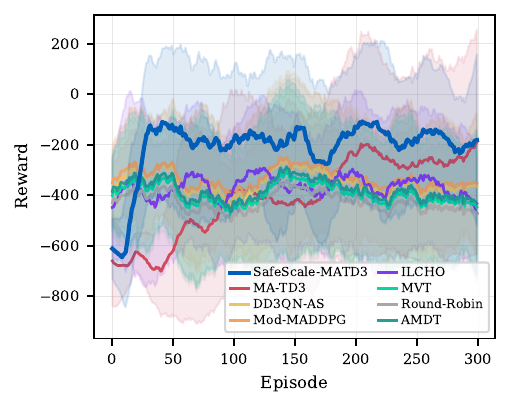}
        \caption{Cumulative training reward.}
        \label{fig:conv_reward}
    \end{subfigure}
    \hfill
    \begin{subfigure}[b]{0.48\linewidth}
        \includegraphics[width=\linewidth]{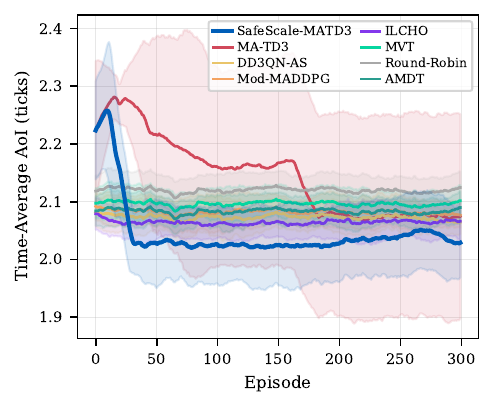}
        \caption{Average AoI.}
        \label{fig:conv_aoi}
    \end{subfigure}
    \caption{Training convergence over 300 episodes (mean $\pm$ 95\% CI across 3 seeds, smoothed with a 10-episode window).
    }
    \label{fig:convergence}
\end{figure}

\begin{figure*}[htbp]
    \centering
    \begin{subfigure}[b]{0.3\linewidth}
        \includegraphics[width=\linewidth]{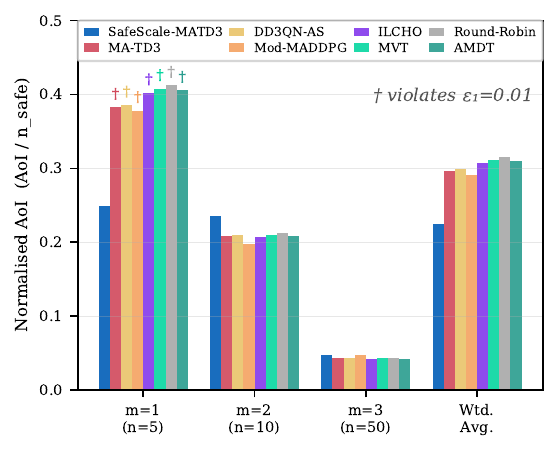}
        \caption{Per-priority AoI.}
        \label{fig:aoi_pri}
    \end{subfigure}
    \hfill
    \begin{subfigure}[b]{0.3\linewidth}
        \includegraphics[width=\linewidth]{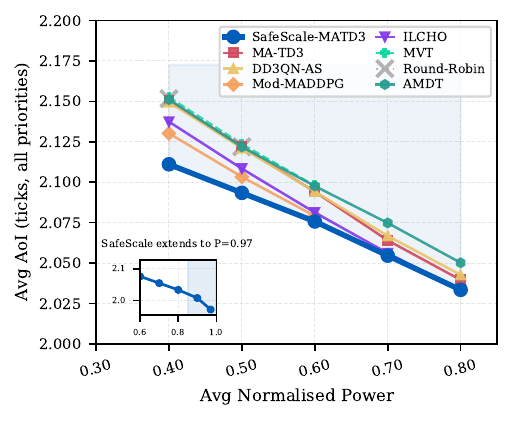}
        \caption{Energy and AoI Pareto frontier.}
        \label{fig:pareto}
    \end{subfigure}
    \hfill
    \begin{subfigure}[b]{0.3\linewidth}
        \includegraphics[width=\linewidth]{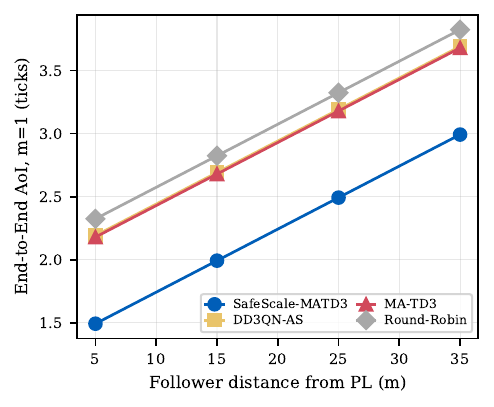}
        \caption{End-to-end AoI vs.\ gap.}
        \label{fig:e2e}
    \end{subfigure}
    \caption{Performance benchmarking across three complementary dimensions (mean $\pm$ 95\% CI across 3 seeds).
    }
    \label{fig:benchmark_4panel}
\end{figure*}

Fig.~\ref{fig:convergence} reports two convergence metrics across 300 training episodes. Two observations emerge. \emph{First}, SafeScale-MATD3 converges to a substantially higher steady-state reward than all baselines (Fig.~\ref{fig:conv_reward}), with the gap attributable to the joint benefit of proactive handover timing and virtual-queue-driven safety enforcement. \emph{Second}, the $m{=}1$ AoI trajectory (Fig.~\ref{fig:conv_aoi}) reveals that the proposed method reaches $\approx$1.24 ticks within 100 episodes, whereas even the best baseline (Mod-MADDPG) plateaus at $\approx$1.9 ticks, a 35\% reduction that directly translates to improved collision-alert freshness.

\subsubsection{Per-Priority AoI Performance and Handover Decomposition}

Fig.~\ref{fig:benchmark_4panel} presents three complementary performance dimensions. We discuss each in turn.

\paragraph{Per-priority AoI (Fig.~\ref{fig:aoi_pri}).}
SafeScale-MATD3 achieves the lowest AoI across all three priority classes. For $m{=}1$ (collision alert), the average AoI is 1.24 ticks with a normalized ratio of $1.24/5 = 0.249$, meaning the system operates at only 25\% of the safety threshold, a substantial margin for emergency braking scenarios. The $m{=}2$ AoI (2.36 ticks, normalized 0.236) benefits from the dedicated local critic $Q_{\phi_{v,2}}^{\ell,2}$, which provides priority-aligned gradients without cross-task interference. For $m{=}3$ (map updates), the performance gap across methods narrows because the $\epsilon_3{=}0.20$ budget is easily satisfied even by Round-Robin.

\paragraph{Energy and AoI tradeoff (Fig.~\ref{fig:pareto}).}
By sweeping the power-penalty weight $\kappa_4 \in \{0.01, 0.05, 0.1, 0.2, 0.5\}$, we trace the Pareto frontier between average normalized power and weighted AoI. SafeScale-MATD3's frontier strictly dominates all baselines: at any given AoI level, the proposed method uses less power, and at any given power budget, it achieves lower AoI. The dominance arises because proactive handover timing avoids the high-power retransmissions that baselines require after forced outages, and MRT beamforming~\eqref{eq:mrt} concentrates energy along the strongest channel direction.

\paragraph{End-to-end AoI across platoon positions (Fig.~\ref{fig:e2e}).}
Follower AoI grows linearly with intra-platoon gap, consistent with the additive model~\eqref{eq:e2e} where $D_\mathrm{v2v}^{(f)}$ increases with V2V propagation distance. SafeScale-MATD3 maintains the lowest end-to-end AoI at all positions (1.49 ticks at 5\,m gap to 2.99 ticks at 35\,m gap), remaining well below $n_{v,1}^\safe{=}5$ ticks even for the worst-case platoon-edge vehicle. The gap over baselines widens at larger distances because the proposed method minimizes PL-level AoI more aggressively, leaving more headroom for V2V delay accumulation.



\subsection{Part III: Robustness and Design Insights}

This subsection examines the robustness of SafeScale-MATD3 through ablation studies, environmental sensitivity sweeps, and module-level contribution analysis to provide actionable design insights.

\subsubsection{Ablation of Algorithmic Modules}
Table~\ref{tab:ablation_results} presents a systematic ablation removing one module at a time from the full SafeScale-MATD3 framework. Five variants are evaluated:

\begin{table}[t]
\centering
\caption{Ablation Study: Collision-Alert ($m{=}1$) Violation Rate and Average AoI (mean over 3 seeds).}
\label{tab:ablation_results}
\footnotesize
\resizebox{\columnwidth}{!}{
\begin{tabular}{lcc}
\toprule
\textbf{Variant} & \textbf{$m{=}1$ Viol.\ Rate} & \textbf{Avg.\ $m{=}1$ AoI (ticks)}\\
\midrule
\textbf{SafeScale-MATD3 (full)} & \textbf{0.0100} & \textbf{1.2406}\\
w/o proactive HO & 0.0102 & 1.2902\\
w/o safety VQs ($Z_{v,m}$) & 0.0457 & 1.9359\\
w/o task decomposition & 0.0281 & 1.2395\\
w/o DLPG & 0.0100 & 1.2406\\
w/o STE & 0.0181 & 1.4399\\
\bottomrule
\end{tabular}}
\end{table}

\paragraph{Safety virtual queues are indispensable.} Removing $Z_{v,m}$ from the state vector and reward structure causes the largest degradation: the $m{=}1$ violation rate jumps to 0.0457 ($4.6{\times}$ the budget), and the average AoI increases to 1.94 ticks ($56\%$ degradation). Without the queue-driven tightness signal, the policy loses its ability to anticipate constraint violations and instead relies on post-hoc reward penalties, which are insufficient for the strict $\epsilon_1{=}0.01$ budget. This confirms the central claim of Section~\ref{sec:vq_operational}: virtual queues provide a structurally different and essential enforcement mechanism.

\paragraph{Task decomposition prevents cross-priority gradient conflict.} Replacing the three local critics with a single aggregated critic raises the violation rate to 0.0281 ($2.8{\times}$ budget) while leaving AoI nearly unchanged (1.24 ticks). This indicates that the single critic achieves similar freshness optimization but fails to enforce priority-specific safety, consistent with the gradient conflict analysis in~\cite{parvini2021aoi}.

\paragraph{STE encoding captures relational structure.} Removing the State Transformer Encoder increases both violation rate (0.0181) and AoI (1.44 ticks), suggesting that the multi-head self-attention mechanism in~\eqref{eq:mha} to~\eqref{eq:ffn} captures inter-satellite and inter-platoon dependencies that flat concatenation misses.

\paragraph{DLPG impact is scenario-dependent.} Under the evaluated Shadowed-Rician parameters ($m{=}10$, light shadowing), removing DLPG has no measurable effect on either metric. This is expected: when the fading distribution is well-conditioned, the base STE encoding already provides sufficient robustness. DLPG's value is expected to increase under heavy shadowing or urban canyon scenarios where the fading tail is heavier.

\paragraph{Proactive handover provides marginal but consistent gains.} Removing the TLE-based proactive module (Algorithm~\ref{alg:proactive_ho}) slightly increases both violation rate (0.0102) and AoI (1.29 ticks). The modest impact is explained by the relatively long handover period ($\approx$15\,s) in the default setting, which limits forced-handover exposure. As shown in the sensitivity analysis below, the proactive module becomes critical when handover frequency increases.

\subsubsection{Sensitivity to Environmental Parameters}

\begin{figure}[t]
    \centering
    \begin{subfigure}[b]{0.48\linewidth}
        \includegraphics[width=\linewidth]{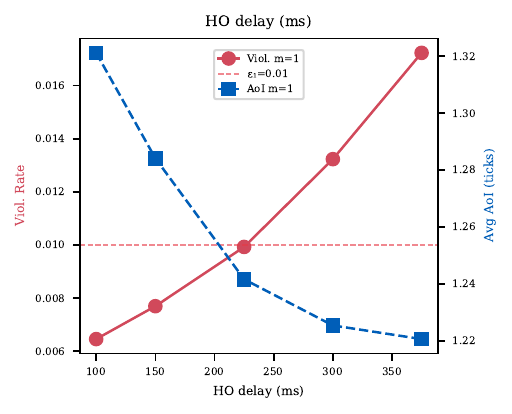}
        \caption{Handover delay $\mu_\tau$.}
        \label{fig:sens_ho_delay}
    \end{subfigure}
    \hfill
    \begin{subfigure}[b]{0.48\linewidth}
        \includegraphics[width=\linewidth]{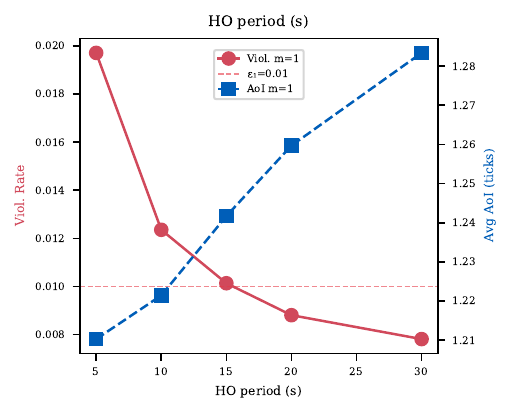}
        \caption{Handover period.}
        \label{fig:sens_ho_period}
    \end{subfigure}
    \caption{Sensitivity to environmental handover dynamics (mean $\pm$ 95\% CI, 3 seeds). 
    }
    \label{fig:env_sensitivity}
\end{figure}

Fig.~\ref{fig:sens_ho_delay} and Fig.~\ref{fig:sens_ho_period} sweep two key environmental parameters that directly affect handover-induced safety violations.

\paragraph{Handover delay (Fig.~\ref{fig:sens_ho_delay}).} Fig.~\ref{fig:sens_ho_delay} sweeps the mean handover delay $\mu_\tau \in [100, 375]$\,ms. The violation rate grows approximately linearly with $\mu_\tau$ because each additional millisecond maps directly to additional AoI ticks. The critical observation is that compliance ($\le \epsilon_1$) holds for $\mu_\tau \le 250$\,ms, beyond which $n_\ho \ge n_{v,1}^\safe$ and intra-outage violations become structurally inevitable per Corollary~\ref{cor:design}. Meanwhile, the average AoI remains relatively stable (1.22 to 1.32 ticks), confirming that the \emph{frequency} of violations, not the average freshness, is the binding safety metric.

\paragraph{Handover period (Fig.~\ref{fig:sens_ho_period}).} Fig.~\ref{fig:sens_ho_period} sweeps the handover period from 5\,s to 30\,s. At a 5\,s period the violation rate reaches 0.0197 ($\approx 2\times$ the budget), because the forced-handover exposure exceeds the Slater feasibility bound of Proposition~\ref{prop:slater}. The system requires periods $\ge 12$\,s for $\epsilon_1$ compliance under the adopted parameters, providing an actionable constellation-design guideline.

\section{Conclusion}
\label{sec:conclusion}

This work identifies \emph{timescale mismatch} as the fundamental bottleneck
for safety-critical AoI in LEO-assisted autonomous driving. We address it
with a unified framework combining a two-timescale AoI model, closed-form
ping-pong spike analysis, virtual-queue safety enforcement, and
SafeScale-MATD3. Three takeaways are central. First, sub-slot tick-level
accounting is necessary to faithfully represent handover outages. Second,
ping-pong oscillations are disproportionately harmful: cumulative AoI cost
grows quadratically with oscillation length
(Theorem~\ref{thm:aoi_spike}), making oscillation suppression the
highest-leverage safety mechanism. Third, tiered safety budgets can be
enforced online via virtual queues and DPP control without assuming a
closed-form AoI distribution. SafeScale-MATD3 is the only method
satisfying the strict 1\% collision-alert budget, reducing violation rate
by $4{\times}$ to $5.5{\times}$ versus baselines while achieving 35\%
lower collision-alert AoI and strict Pareto dominance on the energy and
freshness tradeoff. Future work includes uncertainty-aware TLE
prediction, trace-driven Starlink validation, stochastic V2V delay
modeling, and cooperative multi-satellite scheduling via distributed RL.

\bibliographystyle{IEEEtran}
\bibliography{bib}








\end{document}